\algnewcommand\algorithmicinput{\textbf{Input:}}
\algnewcommand\algorithmicoutput{\textbf{Output:}}
\algnewcommand\Input{\item[\algorithmicinput]}
\algnewcommand\Output{\item[\algorithmicoutput]}
\algnewcommand\Continue{\textbf{continue}}
\theoremstyle{plain} % typical theorem-style (bold header, italic font)
\newtheorem{theorem}{Theorem}[section]
\newtheorem{lemma}[theorem]{Lemma}
\newtheorem{corollary}[theorem]{Corollary}
\newtheorem{observation}[theorem]{Observation}
\newtheorem{proposition}[theorem]{Proposition}
\theoremstyle{definition} % typical definition-style (bold header, normal font)
\newtheorem{definition}[theorem]{Definition}
\theoremstyle{remark} % typical theorem-style (italic header, normal font)
\theoremstyle{plain}
\newcommand{\problemdef}[3]{
%\vspace{-1ex}
  \begin{center}
    \begin{minipage}{0.95\textwidth}
      \noindent
      \textsc{#1}
      
      \vspace{2pt}
      \setlength{\tabcolsep}{1pt}
      \begin{tabularx}{.95\textwidth}{@{}lX@{}}
        \textbf{Input:} 		& #2 \\
        \textbf{Task:} 	& #3
      \end{tabularx}
    \end{minipage}
  \end{center}
  \smallskip
}
\newcommand{\problem}[1]{\textsc{#1}}
\newcommand{\NN}{\mathbb{N}} % natural numbers
\newcommand{\bigO}{\mathcal{O}} % Landau symbol
\newcommand{\dist}{\operatorname{dist}}
\newcommand\abs[1]{\left|#1\right|} % absolute value
\newcommand{\tmin}{t_{\textup{min}}}
\newcommand{\SP}{\#P}
\newcommand{\SPC}{\SP-hard}
\newcommand{\TG}{\mathcal{G}}
\newcommand{\E}{\mathcal{E}}
\newcommand{\va}[2]{({#1},{#2})} % vertex appearance
\newcommand{\tedge}[3]{#1\overset{#2}{\operatorname{---}}#3}
\newcommand{\trans}[3]{#1\overset{#2}{\rightarrow}{#3}}
\newcommand{\tranz}[2]{\overset{#1}{\rightarrow}{#2}}
\newcommand{\sh}{\operatorname{sh}}
\newcommand{\fm}{\operatorname{fm}}
\newcommand{\fa}{\operatorname{fa}}
\newcommand{\p}{\operatorname{pfm}}
\newcommand\shortest[1]{\sigma^{{(\sh)}}_{#1}}
\newcommand\tshortest[2]{\sigma^{{(#1-\sh)}}_{#2}}
\newcommand\sigmageneric[1]{\sigma^{(\star)}_{#1}}
\newcommand\sigmaprefix[1]{\sigma^{{(\p)}}_{#1}}
\newcommand\pdgeneric[2]{\delta^{(\star)}_{#1}(#2)}
\newcommand\pdforemost[2]{\delta^{{(\fm)}}_{#1}(#2)}
\newcommand\pdshortest[2]{\delta^{{(\sh)}}_{#1}(#2)}
\newcommand\pdprefix[2]{\delta^{{(\p)}}_{#1}(#2)}
\newcommand\dgeneric[2]{\delta^{(\star)}_{#1\bullet}(#2)}
\newcommand\dshortest[2]{\delta^{{(\sh)}}_{#1\bullet}(#2)}
\newcommand\dprefix[2]{\delta^{{(\p)}}_{#1\bullet}(#2)}
\newcommand\predecessorsshortest[2]{P_{#1}^{\sh}(#2)}
\newcommand\predecessorstshortest[3]{P_{#2}^{#1-\sh}(#3)}
\newcommand\predecessorsprefix[2]{P_{#1}^{(\p)}(#2)}
\newcommand{\bcfm}{C^{{(\fm)}}_B}
\newcommand{\bcfa}{C^{{(\fa)}}_B}
\newcommand{\bcsh}{C^{{(\sh)}}_B}
\newcommand{\bcpfm}{C^{{(\p)}}_B}
\newcommand{\bc}{C^{(\star)}_B}
\newcommand{\betweenness}{C_B}
\newcommand{\betweennesssz}{\betweenness^{(SZ)}}
\newcommand*{\staticg}{G}
\newcommand*{\staticv}{\tilde{V}}
\newcommand*{\statice}{\tilde{E}}
\newcommand*{\weightfn}{\omega}
\newcommand*{\statictuplehelper}{\staticv, \statice}
\newcommand*{\statictuple}{(\statictuplehelper)}
\newcommand*{\statictuplew}{(\statictuplehelper, \weightfn)}
\newcommand*{\natinterval}[1]{\left[#1\right]}
\begin{document}

\title{Algorithmic Aspects of Temporal Betweenness\thanks{Supported by the DFG,
project MATE (NI 369/17).}} %:\\ Theory \& Experiments}
\author{Sebastian Bu\ss}
\author{Hendrik~Molter}
\author{Rolf~Niedermeier}
\author{Maciej~Rymar}

\affil{Algorithmics and Computational Complexity, Faculty~IV, TU Berlin, Germany\\
\{buss,h.molter,rolf.niedermeier\}@tu-berlin.de, m.rymar@campus.tu-berlin.de}

\date{ }

\maketitle

\begin{abstract}
The \emph{betweenness centrality} of a graph vertex measures how often this
vertex is visited on shortest paths between other vertices of the graph. In the
analysis of many real-world graphs or networks, the betweenness centrality of a
vertex is used as an indicator for its relative importance in the network.
In particular, it is among the most popular tools in social network analysis. 
% Brandes' algorithm computes the betweenness centrality of all vertices in a static graph with $n$ vertices and $m$ edges in $\bigO(n\cdot m)$ time.
In recent years, a growing number of real-world networks has been modeled as
\emph{temporal graphs} instead of conventional (static) graphs. %because the latter are incapable of reflecting the dynamics of a network that changes over time.
In a temporal graph, we have a fixed set of vertices and there is a finite
discrete set of time steps and every edge might be present only at some time
steps. While shortest paths are straightforward to define in static graphs,
temporal paths can be considered ``optimal'' with respect to many different
criteria, including length, arrival time, and overall travel time (shortest,
foremost, and fastest paths).
This leads to different concepts of \emph{temporal betweenness centrality}, 
%of a vertex then can be defined based on any concept of optimal paths; %While this allows closer modeling of dynamic processes, 
posing new challenges on the algorithmic side. 
%Indeed, while in previous work ``temporal betweenness'' studies mostly 
%rely on (the simpler) computation of 
%temporal walks instead of temporal paths, we 
We provide a systematic study 
of temporal betweenness variants based on various concepts of optimal temporal
paths.

Computing the betweenness centrality for vertices in a graph is closely related
to counting the number of optimal paths between vertex pairs. While in static
graphs computing the number of shortest paths is easily doable in polynomial
time, we show that counting foremost and fastest paths is computationally
intractable (\SP-hard) and hence the computation of the corresponding temporal betweenness values is intractable as well. For shortest paths and two selected special cases of foremost paths, we devise polynomial-time algorithms
for temporal betweenness computation.
Moreover, we also explore the distinction between strict (ascending time labels)
and non-strict (non-descending time labels) time labels in temporal paths.
In our experiments with established real-world temporal networks, we demonstrate 
the practical effectiveness of our algorithms, compare the various betweenness concepts, and derive recommendations on their practical use.

\bigskip

\noindent\textbf{Keywords:} network science, network centrality, temporal walks, temporal 
paths, counting complexity, static expansion, experimental analysis
\end{abstract}

\section{Introduction}\label{chap:intro}
Graph metrics such as betweenness centrality are studied and applied
in many application areas, including social and technological network analysis
\cite{leydesdorff_betweenness_2007,tang_temporal_2009}, wireless routing
\cite{daly_social_2007}, machine learning \cite{simsek_skill_2009}, and
neuroscience \cite{van_den_heuvel_aberrant_2010}.
The \emph{betweenness centrality} of a vertex in a graph measures how often this
vertex is visited by a shortest (or optimal) path. 
% many
% shortest (or optimal) paths in the graph go through this vertex. 
High
betweenness centrality scores are usually associated with vertices that can be seen as more important
for the network. In static graphs, betweenness centrality is a well-studied
concept. It is well-known that Brandes' algorithm~\cite{brandes_faster_2001}\footnote{According to Google Scholar, accessed
December 2020, the paper is cited more than 4200~times.} computes the
betweenness centrality of all vertices of a given (unweighted) static graph with
$n$~vertices and $m$~edges in~$\bigO(n\cdot m)$~time and~$\bigO(n+m)$~space.

In temporal graphs, that is, graphs with fixed vertex set and edge set(s)
varying over discrete time steps, the notion of betweenness centrality can be
defined in a similar fashion. However, there are more options how to choose ``optimal''
paths.
Depending on the application, a path may be optimal if it minimizes the number of edges (``shortest''), the arrival time (``foremost''),
or the overall travel time (``fastest''). For any of these path types we
can define and study a variant of temporal betweenness centrality. In addition,
combinations of optimality criteria such as \emph{shortest foremost}
temporal paths can be considered. Furthermore, we will distinguish between
temporal paths with strictly or non-strictly ascending time labels on the edges.
We investigate algorithmic aspects of temporal betweenness variants based on,
strict and non-strict, shortest, foremost, and fastest paths. 
In addition, we
also consider two subtypes of foremost paths, namely shortest foremost and
prefix-foremost paths. %and define temporal betweenness types based on those paths as well.

\paragraph{Related work.}
There is an enormous amount of work on the concept of betweenness 
centrality in static graphs, as already indicated by the huge 
citation numbers concerning Brandes path-breaking algorithm~\cite{brandes_faster_2001}. Betweenness centrality was defined in~1977 
by Freeman \cite{freeman_set_1977}. 
We refrain from further discussing the static case which is already treated 
in many textbooks.

The theory of temporal graphs is comparatively
young~\cite{holme2015modern,HS13,HS19,michail2016introduction,latapy2018stream}
but strongly growing in many directions.
We focus our discussion of related work on temporal walks, paths, and
the computation of 
temporal betweenness centrality. %of vertices in temporal graphs.

Bui-Xuan et al.\ \cite{xuan_computing_2003} did an early work on algorithms that
find optimal temporal paths (called ``journeys'' there). In particular, they
presented algorithms for shortest, fastest, and foremost temporal paths.
Afterwards, Wu et al.\ \citep{wu_efficient_2016} provided state-of-the-art
algorithms for optimal temporal paths. Based on breadth-first search which finds
shortest paths in static graphs, Wu et al.\ \citep{wu_efficient_2016} showed
that shortest, foremost, fastest and reverse-foremost (strict) temporal paths can be found in a similar
fashion.
%This is an important contribution because, unlike shortest static paths, not
%every optimal temporal path is composed of optimal subpaths.
Bentert et al.\ \cite{himmel_efficient_2019} and 
Casteigts et al.~\cite{HMZ19} expanded on the work of Wu et al.\
\citep{wu_efficient_2016} and studied a more complex variation of temporal paths and walks with
constraints on the waiting time in each vertex.
Bentert et al.\ \cite{himmel_efficient_2019} contributed efficient algorithms to
find optimal temporal \emph{walks} but Casteigts et al.~\cite{HMZ19} showed that
finding optimal temporal \emph{paths} is NP-hard in settings with upper bounds
on the waiting time.

While betweenness centrality in static graphs is a well studied concept, the study of betweenness centrality in temporal graphs is rather young. 
%As to betweenness centrality in temporal graphs, 
Tang et al.~\cite{tang_analysing_2010} argued that temporal graphs are more
suitable to represent the dynamics of social and technical networks and
introduced temporal variants of centrality metrics such as closeness and
betweenness centrality based on foremost temporal paths. Building on this, Tang
 et al.~\cite{tang_exploiting_2011} used their notion of temporal closeness to analyze the containment of malware in mobile phone networks.
Nicosia et al.\ \cite{nicosia_graph_2013} also discussed temporal variants of
betweenness and closeness centralities, as well as other temporal graph
metrics. They mostly give an overview on different definitions. 
Kim and Anderson \cite{kim_temporal_2012}
defined the temporal betweenness centrality of a vertex based on shortest paths
in the so-called \emph{static expansion}, which is a (static) directed graph that models the
connectivity properties of the corresponding temporal graph.
They give a polynomial-time algorithm for computing the
temporal betweeness values. 
Afrasiabi Rad et al.~\cite{rad2017computation}
studied foremost walk temporal betweenness and observed \SP-hardness, 
presented an exponential-time algorithm and conducted corresponding experiments.
Tsalouchidou et al.~\cite{tempbtw_2020} consider an
arbitrary linear combination of a path's length and duration as an optimality
criterion and compute the temporal betweenness centrality with respect to such
paths and the help of static expansions. 

We finish with pointing to several temporal graph
surveys~\cite{latapy2018stream,holme2015modern,HS13,HS19} that already provide
some definitions of temporal betweenness centrality.
We point out that most
works
on
temporal
betweenness~\cite{tempbtw_2020,alsayed2015betweenness,habiba2007betweenness,kim_temporal_2012,rad2017computation,tang_analysing_2010}
employ static expansions and do not directly work on the temporal graphs. Also, they usually do not make the point of distinguishing between strict and non-strict paths.
To the best of our knowledge, 
there is no work on systematically
classifying the computational complexity of temporal 
betweenness computation.

\paragraph{Our contributions.}
Our main research question is as follows: How hard---theoretically and
practically---is the computation of the different variants of temporal
betweenness centrality and what does this imply for their usefulness and
applicability in practice?
For the polynomial-time-solvable variants of temporal betweenness we aim to
provide algorithms inspired by Brandes' algorithm~\cite{brandes_faster_2001}
that directly work on the temporal graph rather than on its static expansion. 
Adapting algorithms
from the static directly to the temporal setting has proven successful also in
other contexts, such as for example clique
enumeration~\cite{Him+17,Ben+19,MNR20}.
We
empirically compare our direct approach to static-expansion-based algorithms and
observe that our algorithms are faster on large instances.

The various betweenness variants show remarkable differences in their
computational complexity: while some of them can be computed in polynomial
time, others are computationally hard. Note that computing
betweenness values is closely related to counting optimal paths since,
informally speaking, the betweenness value of a vertex quantifies how many
optimal paths go through this vertex.
We show that counting foremost and fastest temporal paths is \SPC, implying the
same hardness result for the computation of corresponding betweenness concepts.
Since temporal betweenness based on foremost temporal paths is arguably the best
motivated variant in many
application areas~\cite{tang_analysing_2010,rad2017computation} and we obtain
intractability, we investigate two modifications, namely shortest foremost and
(strict) prefix foremost temporal paths. For temporal betweeness based on these modified versions of foremost temporal paths as well as shortest temporal paths, we
obtain tractability results.

\begin{table}[t]
\centering
\begin{tabular}{lcc}
\toprule
& strict & non-strict\\
\midrule
Shortest & $\bigO(n^3\cdot T^2)$ &  $\bigO(n^3\cdot T^2)$ \\
Foremost & \SPC & \SPC \\
Fastest & \SPC  & \SPC \\
Prefix-foremost &  $\bigO(n \cdot M \cdot \log M)$ & \SPC \\
Shortest foremost &  $\bigO(n^3\cdot T^2)$ &  $\bigO(n^3\cdot T^2)$ \\
\bottomrule
\end{tabular}
\caption{An overview of the computational complexity of the temporal betweenness
variants we consider, both for the strict and the non-strict case. Here, $n$
refers to the number of vertices, $M$ refers to the total number of time edges, and $T$ refers to the total number of time
steps.}
\label{table:complexity}
\end{table}

In \cref{table:complexity}, we give an overview of our theoretical
findings.
We present formal (worst-case) computational hardness proofs
(\cref{ch:hardness}).
In the cases where we state polynomial-time computability, we provide algorithms
that compute the temporal betweenness scores of all vertices in a given
temporal graph (\cref{ch:adaptation,ch:algorithms}). When developing the
algorithms, our main approach was to direcly apply the ideas behind Brandes' algorithm~\cite{brandes_faster_2001} in the
temporal setting.

We remark that in contrast to the static setting, temporal \emph{walks}
(visiting vertices multiple times) can also be optimal for certain canonical
optimality criteria.
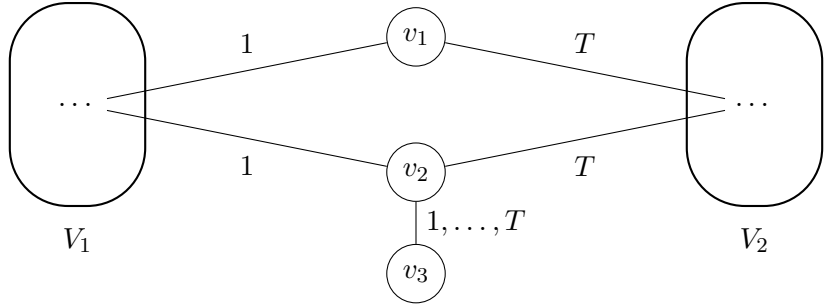
\begin{figure}[t]
	\center
	\begin{tikzpicture}[scale=.9]
		\path
			(0,.5) node[] {$V_1$}
			(10,.5) node[] {$V_2$}
			(5,3.5) node[circle, draw](1) {$v_1$}
			(5,1.5) node[circle, draw](2) {$v_2$}
			(5,0) node[circle, draw](3) {$v_3$}
			(0,2.5) node[](l) {\dots}
			(10,2.5) node[](r) {\dots};
		\draw (l) -- (1) node [midway, above = 3pt] {$1$};
		\draw (l) -- (2) node [midway, below = 3pt] {$1$};
		\draw (1) -- (r) node [midway, above = 3pt] {$T$};
		\draw (2) -- (r) node [midway, below = 3pt] {$T$};
		\draw (2) -- (3) node [midway,right] {$1,\dots,T$};
		\draw[thick,rounded corners=22pt] (-1,1) -- (-1,4) -- (1,4) -- (1,1) -- cycle;
		\draw[thick,rounded corners=22pt] (9,1) -- (9,4) -- (11,4) -- (11,1) -- cycle;
	\end{tikzpicture}
	\caption{Assume $V_1$ and $V_2$ are sets of vertices that are adjacent to~$v_1$ and~$v_2$ at time steps~$1$ and~$T$, respectively. Then~$v_1$ and~$v_2$ are vertices with high temporal betweenness based on temporal paths whereas $v_3$ has very low temporal betweenness based on temporal paths. However for e.g.\ foremost temporal walks, $v_3$ would also have a high temporal betweenness.}
	\label{figure:artifacts}
\end{figure}
In our definition of temporal betweenness centrality we use the number of
optimal temporal paths as opposed to the number of optimal temporal walks. This
is natural for static graphs because static shortest walks are always paths. In
the temporal case, this is not always true---it is possible that there is a
non-path temporal walk (visiting vertices multiple times) that arrives at the
same time as the foremost temporal path, so the number of foremost temporal
paths and foremost temporal walks between two vertices can be different.
Consider the graph shown in Figure \ref{figure:artifacts}. Clearly, every walk from the left half to the right passes either~$v_1$ or~$v_2$ and since the edges on the right are only present at exactly one time step, every walk going from left to right is a foremost walk. Intuitively, $v_1$ and~$v_2$ should have very similar, high betweenness scores, whereas $v_3$ should be close to zero. But if we use walks instead of paths for our definitions, we get a very high number of walks alternating between $v_2$ and $v_3$ before arriving on the right side, so $v_2$ and $v_3$ would get a high centrality score.
We conclude that paths are more suitable than walks for defining temporal betweenness centrality.
Furthermore, we distinguish between
strict and non-strict temporal paths.

We provide a thorough formal study of the computational complexity landscape of 
temporal betweenness centrality, altogether obtaining a fairly 
complete picture concerning the computation of temporal betweenness. 
We implemented and compared several of our algorithms (in terms of running
time, distribution of the betweenness values, and vertex rankings induced by
the betweenness values). 
We also compared our algorithms (in terms of running time) to the alternative
approach of computing betweenness values on a suitable static expansion.

From our experimental results we derive conclusions
for working with temporal betweenness centrality in applications.
Our freely available 
implementations and the experimental investigations 
are based on our theoretical findings and provide guidelines for 
future work with one of the perhaps most fundamental network analysis 
concepts in a temporal context.

An extended abstract of this paper appeared in the proceedings of the 26th ACM
SIGKDD International Conference on Knowledge Discovery \& Data Mining (KDD
'20)~\cite{kdd_paper}. This version contains full proof details and an empirical
comparison of our algorithms (in terms of running time) to the alternative
approach of computing betweenness values on a suitable static expansion.

\paragraph{Organization of the paper.}
In \cref{ch:prelims}, we provide the basic notation used in this
paper as well as some preliminary observations. In \cref{ch:hardness}, we prove
computational hardness for some variants of temporal betweenness (see \cref{table:complexity}) and in
\cref{ch:adaptation,ch:algorithms}, we provide theoretical foundations and
algorithms for the remaining temporal betweenness concepts. In \cref{ch:exp}, we
present our experimental evaluation of the algorithms described in
\cref{ch:algorithms}. We conclude in \cref{ch:conclusion}.

\section{Preliminaries \& Basic Observations}
	\label{ch:prelims}
In this section, we introduce the most important mathematical definitions and
terminology used in our work. We further present some basic observations.

\subsection{Temporal Graphs and Paths}

The fundamental mathematical object we are concerned with are \emph{temporal graphs}.

\begin{definition}[Temporal Graph]
An undirected \emph{temporal graph} is a triple~$(V,\E,T)$ such that 
$V$ is a set of vertices,
$\E \subseteq \{(\{u,v\},t) \mid u,v \in V,u\neq v,t \in [T]\}$ is a set of time edges, and
$T\in \mathbb{N}$, where $[T] = \{1,\dots,T\}$ is a set of time steps.
\end{definition}

For a temporal graph~$\TG$, we use~$V(\TG)$ to denote the set of vertices,
$E(\TG)$ for the set of time edges, and $E_t(\TG)$ to denote the set of edges
of~$\TG$ which are present at time step $t$, i.e., $E_t(\TG) := \{\{u,v\} \mid
(\{u,v\},t)\in E(\TG)\}$. For a time edge~$e$, we use $t(e)$ to denote the time
label of $e$. We call $V(\TG) \times [T]$ the set of \emph{vertex appearances}.

We only consider undirected temporal graphs. However, temporal paths and walks
are implicitly directed because of the ascending time labels. Hence, we need a
notion for directed \emph{transitions} on a temporal path or walk which indicate
not only which time edge is used but also in which direction.
%\begin{definition}[Transition]
For any time edge $e = (\{v,w\},t)$ we call $(v,w,t)$ the \emph{transition} from $v$ to $w$ at time step $t$. We call $v$ the starting point and $w$ the endpoint of the transition. 
%\end{definition}
Using this, we can now define temporal walks and temporal paths.

\begin{definition}[Temporal Walk]
A \emph{temporal walk} $W$ on a temporal graph $\TG$ from vertex $s$ to vertex $z$ is an ordered sequence of transitions $(e_1,\dots,e_k)\in E^k$ such that the endpoint of $e_i$ is the starting point of $e_{i+1}$ and $t(e_i) \leq t(e_{i+1})$ for each $i\in\{1,\dots,k-1\}$. We call a temporal walk \emph{strict} if $t(e_i) < t(e_{i+1})$ for each $i\in\{1,\dots,k-1\}$.
\end{definition}

A temporal walk may visit the same vertex more than once. In contrast to that,
a temporal \emph{path} visits each vertex at most once. This is analogous to the definitions for static graphs.

\begin{definition}[Temporal Path]
A \emph{temporal path} $P = (e_i)_{i\in[k]}$ is a temporal walk such that every
vertex $v\in V(\TG)$ is starting point of at most one transition $e_i$ and
endpoint of at most one transition~$e_{i'}$ for some $i,i'\in [k]$.
\end{definition}

For readability, we use the notation $\trans{v}{t}{w}$ instead of the triple $(v,w,t)$. Since the endpoint of a transition is equal to the starting point of the next one for any walk, we use a shortened notation omitting the doubled vertices. For instance, we denote the ``middle'' temporal path from $s$ to $z$ in \cref{figure:example1} by:
\[
	P = (\trans{s}{1}{b_1}\tranz{2}{b_2}\tranz{3}{b_3}\tranz{4}{z}).
\]

In this example, $P$ is a temporal \emph{path} since all involved vertices are
visited only once. Moreover,~$P$ is a \emph{strict} temporal path because the
time labels are strictly ascending.

\subsection{Optimality of Temporal Walks and Paths}
In static graphs, shortest paths are a central concept. In temporal graphs, there are different concepts of optimal paths. \cref{figure:example1} illustrates three of the most common optimization criteria: \emph{shortest}, \emph{foremost}, and \emph{fastest}~\cite{xuan_computing_2003}.
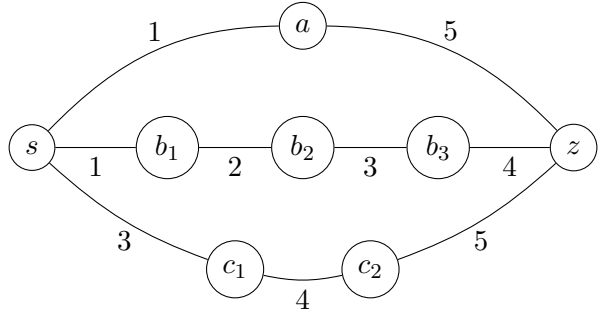
\begin{figure}[t]
	\centering
	\begin{tikzpicture}[scale=.9,yscale=1.2]
		\path
			(0,2) node[circle, draw](s) {$s$}
			(4,3.5) node[circle, draw](a) {$a$}
			(2,2) node[circle, draw](b1) {$b_1$}
			(4,2) node[circle, draw](b2) {$b_2$}
			(6,2) node[circle, draw](b3) {$b_3$}
			(3,0.5) node[circle, draw](c1) {$c_1$}
			(5,0.5) node[circle, draw](c2) {$c_2$}
			(8,2) node[circle, draw](z) {$z$};
		\draw (s) edge [bend left=20] node [midway, above] {1} (a);
		\draw (s) edge node [midway, below] {1} (b1);
		\draw (s) edge [bend right=10] node [midway, below] {3} (c1);
		\draw (b1) edge node [midway, below] {2} (b2);
		\draw (b2) edge node [midway, below] {3} (b3);
		\draw (c1) edge [bend right=10] node [midway, below] {4} (c2);
		\draw (b3) edge node [midway, below] {4} (z);
		\draw (c2) edge [bend right=10] node [midway, below] {5} (z);
		\draw (a) edge [bend left=20] node [midway, above] {5} (z);
	\end{tikzpicture}
	\caption{This temporal graph features a \emph{shortest}, a \emph{foremost}, and a \emph{fastest} temporal path from $s$ to $z$.} % Depending on the application, any of them may be considered optimal.}
	\label{figure:example1}
\end{figure}
\begin{itemize}
	\item $P_1 = (\trans{s}{1}{a}\tranz{5}{z})$ is shortest,
	\item $P_2 = (\trans{s}{1}{b_1}\tranz{2}{b_2}\tranz{3}{b_3}\tranz{4}{z})$ is foremost, and
	\item $P_3 = (\trans{s}{3}{c_1}\tranz{4}{c_2}\tranz{5}{z})$ is fastest.
\end{itemize}
%Each of the paths exemplifies one of the three most commonly used optimization criteria: the \emph{shortest} path is $P_1$ because it uses the least amount of edges. The \emph{foremost} path is $P_2$ because it has the earliest arrival time. Finally, $P_3$ is the \emph{fastest} path because it minimizes the overall transition time.
More formally, we use the following definitions:
\begin{definition}
Let $\TG = (V,\E,T)$ be a temporal graph. Let $s,z \in V$ and let $W$ be a temporal walk from $s$ to $z$. 
\begin{itemize}
\item $W$ is a \emph{shortest} walk if there is no walk~$W'$ from $s$ to $z$ such that $W'$ contains less transitions than $W$.
\item $W$ is a \emph{foremost} walk if there is no walk~$W'$ from $s$ to $z$ such that $W'$ has an earlier arrival time than $W$.
\item $W$ is a \emph{fastest} walk if there is no walk $W'$ from $s$ to $z$ such that the difference between arrival and start time is smaller for $W'$ than it is for $W$.
\end{itemize}
\end{definition}
A temporal \emph{path} is shortest, foremost, or fastest if it is a shortest,
foremost, or fastest temporal walk, respectively. Note that for the optimality
criteria ``foremost'' and ``fastest'' optimal temporal walks are not necessarily
temporal paths. Imagine a temporal graph where all time edges incident to $s$
have the same time label and also all time edges incident to~$z$ have the same
time label. Then every walk from $s$ to $z$ (if it exists) is foremost and
fastest, since all walks leave $s$ at the same time and arrive at $z$ at the same time.

Together with the distinction between strict and non-strict, this gives us
\emph{six} different types of optimal temporal paths. In the following, we use
the term ``$\star$-optimal'' temporal path, where $\star$ denotes the type.

%Analogous to Brandes \citep{brandes_faster_2001}, we use $\sigma_{sz}$ to denote the number of shortest paths from~$s$ to~$z$ in a static graph. 
Next, we introduce terminology and notation for counting optimal temporal paths. 
\begin{definition}
Let $\TG$ be a temporal graph. For any $s,z \in V(\TG)$, $\sigmageneric{sz}$ is the number of $\star$-optimal temporal paths from~$s$ to~$z$.
%\begin{itemize}
%	\item $\shortest{sz}$ is the number of \emph{shortest} paths from $s$ to $z$,
%	\item $\foremost{sz}$ is the number of \emph{foremost} paths from $s$ to $z$, and
%	\item $\fastest{sz}$ is the number of \emph{fastest} paths from $s$ to $z$.
%\end{itemize}
\end{definition}
We set $\sigmageneric{vv}:=1$.
%In addition to that, Brandes \citep{brandes_faster_2001} uses $\sigma_{sz}(v)$ to denote the number of shortest paths from~$s$ to~$z$ that pass through $v$ in a static graph. 
%Again, we adapt the notation to temporal graphs and the temporal optimality concepts.
We further introduce terminology and notation for counting optimal temporal paths that visit a certain vertex or a certain vertex appearance.
\begin{definition}
Let $v \in V$ be any vertex and let $t \in [T]$ be a time step. Then,
\begin{itemize}
	\item $\sigmageneric{sz}(v)$ is the number of $\star$-optimal paths that pass through~$v$, and
	\item $\sigmageneric{sz}(v,t)$ is the number of $\star$-optimal paths that
	arrive at~$v$ exactly at time step~$t$, that is, the paths that contain the
	transition $\trans{u}{t}{v}$ for some $u\in V$.
\end{itemize}
\end{definition}
We set $\sigmageneric{sz}(s):=\sigmageneric{sz}$ and
$\sigmageneric{sz}(z):=\sigmageneric{sz}$.
We define $\sigmageneric{sz}(s,0):=\sigmageneric{sz}$, $\sigmageneric{sz}(s,t):=0$ for all $t\neq 0$, and $\sigmageneric{sz}(v,0):=0$ for all $v\neq s$. 
Defining these corner cases as above allows us to keep certain proofs simpler by
avoiding to discuss the corner cases explicitely. Intuitively, we assume
that we have a dummy vertex appearance $(s,0)$ for every temporal path starting at
$s$ that we consider to be the vertex appearance that the path arrives at at
time step zero.

\subsection{Temporal Betweenness Centrality}
In static graphs, the \emph{betweenness centrality} of a vertex measures how
often this vertex is passed on shortest paths between pairs of vertices in the
graph. Freeman \cite{freeman_set_1977} defines the \emph{betweenness
centrality} $C_B(v)$ of a vertex~$v$ (in a connected graph) as \[ C_B(v) :=
\sum_{s \neq v \neq z} \frac{\sigma_{sz}(v)}{\sigma_{sz}}.\]

As we have seen above, there are different notions of optimal paths (for
example, fastest, shortest, foremost) in temporal graphs. Thus, there are several
options how to define \emph{temporal betweenness centrality} based on any of
these notions. 
Moreover, we do not want to assume that there is a temporal path from any vertex
to any other vertex in the graph. That is, we assume that there are vertex
pairs $s,z$ with $\sigma_{sz}=0$ which we want to leave out when summing over
all vertex pairs.
To formalize this, we use a \emph{connectivity matrix} $A$ of the temporal
graph: let $A$ be a
$|V|\times |V|$ matrix, where for every $v,w\in V$ we have that $A_{v,w}=1$ if
there is a temporal path from $v$ to $w$, and $A_{v,w}=0$ otherwise.
Note that $A_{s,z}=1$ implies that $\sigma_{sz}\neq 0$. 

Formally, temporal betweenness based on these different concepts of path
optimality is defined as follows.
\begin{definition}[Temporal Betweenness]
The \emph{temporal betweenness} of any vertex~$v\in V$ is given by:
\[
	C^{(\star)}_B(v) := \sum_{s \neq v \neq z \text{ and } A_{s,z}=1}
	\frac{\sigmageneric{sz}(v)}{\sigmageneric{sz}}. %,\\
	%\star \in \{\text{sh},\text{fa},\text{fm}\}.
\]
\end{definition}

For our work, we use a slightly different version of temporal betweenness
which is defined for vertex appearances and show that the temporal betweenness
as defined above can be easily computed from the modified version.
This allows us to simplify some of our proofs. 
We drop the condition that, when summing over all vertex pairs, these vertices
have to be different from the vertex of which we want to know the betweenness.
Formally, we define
\[
	\hat{C}^{(\star)}_B(v,t) := \sum_{s,z \in V \text{ and } A_{s,z}=1}
	\frac{\sigmageneric{sz}(v,t)}{\sigmageneric{sz}}. %,\\
	%\star \in \{\text{sh},\text{fa},\text{fm}\}.
\]
We can observe that using the connectivity matrix $A$ for the temporal
graph, we can compute the temporal betweenness~$C^{(\star)}_B(v)$ from the modified
temporal betweenness values of the vertex appearances~$(v,t)$. 
\begin{lemma}\label{lem:betversions}For any vertex~$v \in V$ it holds
\[
C^{(\star)}_B(v) = \sum_{t\in [T]\cup\{0\}} \hat{C}^{(\star)}_B(v,t) - \sum_{w\in V}
(A_{v,w} + A_{w,v}) +1.
\]
\end{lemma}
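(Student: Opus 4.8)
The plan is to unfold both definitions and carefully track the difference between the constrained sum (defining $C_B^{(\star)}(v)$) and the unconstrained sum over vertex appearances (defining $\hat C_B^{(\star)}(v,t)$). The key identity to establish first is that summing the vertex-appearance betweenness over all time steps recovers the ``pass through $v$'' count: for any fixed source-sink pair $s,z$ with $A_{s,z}=1$, every $\star$-optimal path from $s$ to $z$ that passes through $v$ arrives at $v$ at exactly one time step (because a path visits $v$ at most once), so $\sum_{t\in[T]\cup\{0\}}\sigmageneric{sz}(v,t) = \sigmageneric{sz}(v)$, where the $t=0$ term handles the corner case $v=s$ via the convention $\sigmageneric{sz}(s,0)=\sigmageneric{sz}$. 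Hence $\sum_{t} \hat C_B^{(\star)}(v,t) = \sum_{s,z:\,A_{s,z}=1} \sigmageneric{sz}(v)/\sigmageneric{sz}$, which is the betweenness-like sum but over \emph{all} pairs rather than only pairs with $s\neq v\neq z$.

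Next I would isolate exactly which terms the unconstrained sum includes that the constrained one does not, namely pairs where $s=v$ or $z=v$. Using the conventions $\sigmageneric{vz}(v)=\sigmageneric{vz}$ and $\sigmageneric{sv}(v)=\sigmageneric{sv}$, each such term contributes $1$ to the sum (provided the corresponding connectivity entry is $1$). The pairs with $s=v$ contribute $\sum_{w\in V:\,A_{v,w}=1} 1 = \sum_{w} A_{v,w}$, and the pairs with $z=v$ contribute $\sum_{w} A_{w,v}$; the single pair $s=z=v$ is counted in both of these, and since $\sigmageneric{vv}=1$ and we may take $A_{v,v}=1$ (there is a trivial empty path), it contributes $1$ and has been double-counted, so we add back $1$. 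Therefore
\[
\sum_{t\in[T]\cup\{0\}} \hat C_B^{(\star)}(v,t) = C_B^{(\star)}(v) + \sum_{w\in V} A_{v,w} + \sum_{w\in V} A_{w,v} - 1,
\]
which rearranges to the claimed formula.

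**The main obstacle** I anticipate is bookkeeping the corner cases cleanly: the behavior of $A_{v,v}$, whether the empty path from $v$ to $v$ is counted (the paper sets $\sigmageneric{vv}:=1$), and the dummy appearance $(s,0)$ with its special-case conventions $\sigmageneric{sz}(s,0):=\sigmageneric{sz}$ and $\sigmageneric{sz}(v,0):=0$ for $v\neq s$. I would state explicitly at the outset that $A_{v,v}=1$ for all $v$ (consistent with $\sigmageneric{vv}=1$), then verify that with this convention the $t=0$ contribution to $\hat C_B^{(\star)}(v,0)$ is precisely $\sum_{s:\,A_{s,v}... }$—wait, more carefully: $\sigmageneric{sz}(v,0)$ is nonzero only when $v=s$, in which case it equals $\sigmageneric{sz}$, so $\hat C_B^{(\star)}(v,0) = \sum_{z:\,A_{v,z}=1} 1 = \sum_z A_{v,z}$. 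This shows the $t=0$ slice already accounts for \emph{all} pairs with $s=v$, and the slices $t\in[T]$ account for genuine arrivals at $v$, including the pairs with $z=v$ (where $v$ is the endpoint). Making sure no pair is missed or double-counted across the $t=0$ versus $t\in[T]$ split, and that it meshes with the $s=v$ versus $z=v$ correction terms, is the delicate part; once the conventions are pinned down it reduces to elementary summation.
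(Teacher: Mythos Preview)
Your proposal is correct and follows essentially the same approach as the paper: both arguments use the identity $\sum_{t\in[T]\cup\{0\}}\sigmageneric{sz}(v,t)=\sigmageneric{sz}(v)$ to relate the two betweenness versions, then apply inclusion--exclusion over the pairs with $s=v$ or $z=v$, noting that each such term contributes~$1$ by the conventions $\sigmageneric{vz}(v)=\sigmageneric{vz}$, $\sigmageneric{sv}(v)=\sigmageneric{sv}$, and $\sigmageneric{vv}=1$. The paper's proof simply runs the chain of equalities starting from $C^{(\star)}_B(v)$ rather than from $\sum_t\hat C^{(\star)}_B(v,t)$, but the content is identical.
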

\begin{proof}
We show the claim as follows.
\begin{align*}
C^{(\star)}_B(v) &= \sum_{s \neq v \neq z \text{ and } A_{s,z}=1}
	\frac{\sigmageneric{sz}(v)}{\sigmageneric{sz}}\\
	& = \sum_{s,z \in V \text{ and }
	A_{s,z}=1} \frac{\sigmageneric{sz}(v)}{\sigmageneric{sz}} - \sum_{s \in V \text{ and }
	A_{s,v}=1} \frac{\sigmageneric{sv}(v)}{\sigmageneric{sv}} - \sum_{z \in V \text{ and }
	A_{v,z}=1} \frac{\sigmageneric{vz}(v)}{\sigmageneric{vz}} +
	\frac{\sigmageneric{vv}(v)}{\sigmageneric{vv}}\\
	& = \sum_{s,z \in V \text{ and }
	A_{s,z}=1} \sum_{t\in [T]\cup\{0\}} \frac{\sigmageneric{sz}(v,t)}{\sigmageneric{sz}} -
	\sum_{s \in V} A_{s,v} - \sum_{z \in V} A_{v,z} + 1\\
	&= \sum_{t\in [T]\cup\{0\}} \hat{C}^{(\star)}_B(v,t) - \sum_{w\in V}
(A_{v,w} + A_{w,v}) +1
\end{align*}
\end{proof}

\section{Computationally Hard Temporal Betweenness Variants}
\label{ch:hardness}
In this section, we present counting problems closely related to temporal
betweenness, including the computation of temporal betweenness itself, and show
that they are \SPC. In particular, we will show that counting all temporal
paths is \SPC\ for both strict and non-strict paths. The same holds true for
foremost and fastest temporal paths, but not for shortest temporal paths. Our
hardness results are based on the following two counting problems for which
Valiant \citep{valiant_complexity_1979} showed
\SP-completeness:\footnote{Intuitively speaking, \SP\ is the counting analogue
of NP for decision problems.} 

\problemdef{Paths}{A static graph $G=(V,E)$, two vertices
$s,z\in V$.}{Count the number of different paths from $s$ to $z$ in $G$.}
\problemdef{Imperfect Matchings}{A bipartite static graph.}{Count the
number of different matchings (of any size) in $G$.}

We use polynomial-time counting reductions\footnote{A \emph{polynomial-time
counting reduction} transforms instances $I_A$ of a counting problem $A$ to
instances $I_B$ of a counting problem $B$ such that number of solutions of $I_A$
can be computed in polynomial time from the number of solutions of $I_B$.} from the two problems
above to prove that counting problems related to temporal betweennes are \SPC.
More specifically, we show the \SP-hardness of the following problems:

\problemdef{(Strict) Temporal Paths}{A temporal graph $\TG=(V,\E,T)$, two
vertices $s,z\in V$.}{Count the number of (strict) temporal paths from $s$ to
$z$ in $\TG$.}
%\problemdef{Strict Temporal Paths}{A temporal graph $\TG$, vertices $s,z$.}{Count the number of strict temporal paths from $s$ to $z$ in $\TG$.}
\problemdef{Foremost (Strict) Paths}{A temporal graph $\TG=(V,\E,T)$, two
vertices $s,z\in V$.}{Count the number of foremost (strict) temporal paths from
$s$ to $z$ in $\TG$.} 
\problemdef{Fastest (Strict) Paths}{A temporal graph $\TG=(V,\E,T)$, two
vertices $s,z\in V$.}{Count the number of fastest (strict) temporal paths from $s$ to $z$ in $\TG$.}
\problemdef{(Foremost/Fastest) (Strict) Temporal Betweenness}{A temporal graph
$\TG=(V,\E,T)$, a vertex $v\in V$.}{Compute the betweenness centrality of $v$
in $\TG$.}

As also observed by Afrasiabi Rad et al.~\cite{rad2017computation}, for
non-strict paths the \SPC\ problem \problem{Paths} is contained as a special
case in \problem{Temporal Paths}, since any static graph can be transformed
into an equivalent temporal graph with lifetime $T=1$. Furthermore, each
temporal path in this instance is also foremost and fastest. Hence, we have the
following:

\begin{proposition}\label{prop:temppaths}
\problem{Temporal Paths}, \problem{Foremost Paths}, and \problem{Fastest Paths} are \SPC.
\end{proposition}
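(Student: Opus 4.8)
The plan is to give, for each of the three problems, an essentially trivial polynomial-time counting reduction from \problem{Paths}, which is \SP-hard by Valiant's theorem. Given an instance $(G,s,z)$ of \problem{Paths} with $G=(V,E)$, I would build the temporal graph $\TG := (V,\E,1)$ with a single time step, where $\E := \{(\{u,v\},1) \mid \{u,v\}\in E\}$. This construction clearly runs in polynomial time and produces a valid temporal graph, since $G$ has no self-loops and $1\in[1]$.

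The heart of the argument is the observation that the (non-strict) temporal paths from $s$ to $z$ in $\TG$ are in bijection with the $s$-$z$ paths in $G$. In one direction, since every time edge carries the label $1$, any static $s$-$z$ path, read as the sequence of its transitions all at time $1$, satisfies the non-strict monotonicity condition $t(e_i)\le t(e_{i+1})$ and visits every vertex at most once, hence is a temporal path. Conversely, a temporal path visits each vertex at most once by definition, so forgetting the (constant) time labels yields a static $s$-$z$ path; these two maps are mutually inverse. Consequently the number of temporal paths from $s$ to $z$ in $\TG$ equals the number of $s$-$z$ paths in $G$, and since both problems ask exactly for this number, the identity map on solution counts is the required reduction to \problem{Temporal Paths}.

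For \problem{Foremost Paths} and \problem{Fastest Paths} I would additionally note that in $\TG$ every $s$-$z$ temporal path has arrival time $1$ and duration $0$, so it is automatically both foremost and fastest; there are no suboptimal temporal paths to discard. Hence the number of foremost temporal paths and the number of fastest temporal paths from $s$ to $z$ also coincide with the number of $s$-$z$ paths in $G$, and the same identity reduction applies verbatim.

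I do not expect a genuine obstacle here: the construction is immediate and the correspondence is close to definitional. The two points that need to be stated carefully are (i) that the non-strict walk condition is satisfied precisely because all labels are equal --- this is exactly where non-strictness is used, and explains why the argument does \emph{not} carry over to the strict variants, which require the separate, more elaborate constructions given later --- and (ii) that \problem{Paths} as defined above is indeed the \SP-hard problem of Valiant, so that no gadgetry beyond the lifetime-$1$ embedding is needed.
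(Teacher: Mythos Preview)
Your proposal is correct and matches the paper's own argument essentially verbatim: embed the static instance as a temporal graph with lifetime $T=1$, observe the bijection between static $s$-$z$ paths and non-strict temporal $s$-$z$ paths, and note that every such temporal path is automatically foremost and fastest. You have spelled out the details more carefully than the paper (which dispatches the proposition in two sentences), but the approach is identical.
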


The counting problem \problem{Strict Temporal Paths} is \SPC\ as well, but the
proof is more demanding since strict paths are fundamentally different from
static paths, whereas non-strict paths could be regarded as a generalization of
static paths, allowing for the simple argument used above. We show the
\SP-hardness of \problem{Strict Temporal Paths} by a polynomial-time counting 
reduction from \problem{Imperfect Matchings}.
%Valiant \cite{valiant_complexity_1979} showed that \problem{Imperfect
% Matchings} is \SP-complete. %, but clearly, the two problems are equally hard, since every (bipartite) graph has exactly one empty matching.
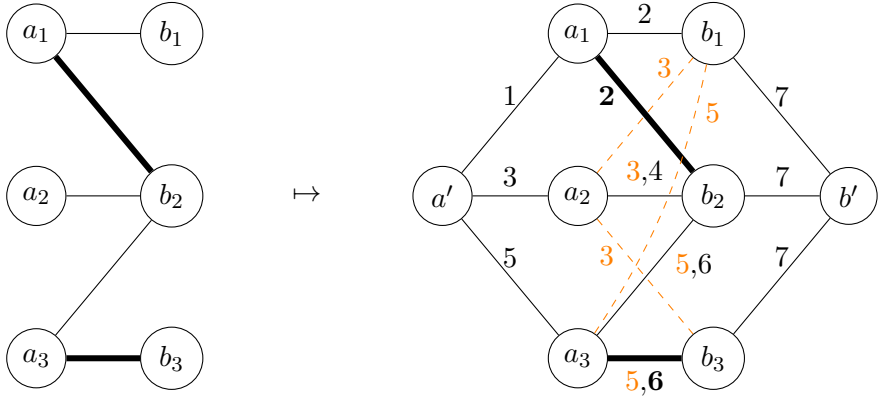
\begin{figure}[t]
	\center
	\begin{tikzpicture}[scale=.9,yscale=1.2]
		\path
			(0,4) node[circle, draw] (a1) {$a_1$}
			(0,2) node[circle, draw] (a2) {$a_2$}
			(0,0) node[circle, draw] (a3) {$a_3$}
			(2,4) node[circle, draw] (b1) {$b_1$}
			(2,2) node[circle, draw] (b2) {$b_2$}
			(2,0) node[circle, draw] (b3) {$b_3$};
		\draw (a1) edge (b1);
		\draw (a1) edge[line width=0.8mm] (b2);
		\draw (a2) edge (b2);
		\draw (a3) edge (b2);
		\draw (a3) edge[line width=0.8mm] (b3);
		
		\path (4,2) node {$\mapsto$};
		
		\path
			(6,2) node[circle, draw] (a) {$a'$}
			(8,4) node[circle, draw] (a1) {$a_1$}
			(8,2) node[circle, draw] (a2) {$a_2$}
			(8,0) node[circle, draw] (a3) {$a_3$}
			(10,4) node[circle, draw] (b1) {$b_1$}
			(10,2) node[circle, draw] (b2) {$b_2$}
			(10,0) node[circle, draw] (b3) {$b_3$}
			(12,2) node[circle, draw] (b) {$b'$};
		\draw (a) -- (a1) node[midway, above] {1};
		\draw (a) -- (a2) node[midway, above] {3};
		\draw (a) -- (a3) node[midway, above] {5};
		
		\draw (b) -- (b1) node[midway, above] {7};
		\draw (b) -- (b2) node[midway, above] {7};
		\draw (b) -- (b3) node[midway, above] {7};
		
		\draw (a1) -- (b1) node[midway, above] {2};
		\draw (a1) edge[line width=0.8mm] node[pos=.1, below=2pt] {\textbf{2}} (b2);
		\draw (a2) edge[dashed,orange] node[pos=0.7, above=2pt] {\color{orange} 3}
		(b1); \draw (a2) edge[dashed,orange] node[pos=0.1, below=2pt] {\color{orange}
		3} (b3); \draw (a2) -- (b2) node[midway, above] {{\color{orange}3},4};
		\draw (a3) edge node[pos=0.6, right=4pt] {{\color{orange}5},6} (b2);
		\draw (a3) edge[line width=0.8mm] node[midway, below] {{\color{orange}5},\textbf{6}} (b3);
		\draw (a3) edge[dashed,orange, bend right=12]  node[pos=0.85,right] {5} (b1);
	\end{tikzpicture}	
	\caption{Given a static bipartite graph $G$, we construct a temporal graph $\TG$ such that the number of matchings in~$G$ equals the number of strict paths from $a'$ to $b'$ in $\TG$. In this example, the matching $\{\{a_1,b_2\},\{a_3,b_3\}\}$ (highlighted in bold) translates to the path 
	$(\trans{a'}{1}{a_1}\tranz{2}{b_2}\tranz{5}{a_3}\tranz{6}{b_3}\tranz{7}{b'})$.
	\label{figure:gadgetmatching}
	}
\end{figure}

\begin{theorem}
\problem{Strict Temporal Paths} is \SPC.
\end{theorem}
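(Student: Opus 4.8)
The plan is to give a polynomial-time counting reduction from \problem{Imperfect Matchings} on bipartite graphs, realizing the gadget sketched in \cref{figure:gadgetmatching}. Given a bipartite graph $G$ with parts $A=\{a_1,\dots,a_p\}$ and $B=\{b_1,\dots,b_q\}$, I construct a temporal graph $\TG$ on vertex set $\{a'\}\cup A\cup B\cup\{b'\}$ with the following time edges: $(\{a',a_i\},2i-1)$ for every $i\in[p]$; $(\{b',b_j\},2p+1)$ for every $j\in[q]$; a ``back'' edge $(\{a_i,b_j\},2i-1)$ for every $i\in[p]$ and $j\in[q]$; and a ``forward'' edge $(\{a_i,b_j\},2i)$ for every edge $\{a_i,b_j\}\in E(G)$. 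This $\TG$ has $p+q+2$ vertices, $\bigO(pq)$ time edges, and lifetime $T=2p+1$, so it is computable in polynomial time, and I set $s:=a'$ and $z:=b'$.

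The heart of the argument is a bijection between the non-empty matchings of $G$ and the strict temporal paths from $a'$ to $b'$ in $\TG$. First I would pin down the shape of an arbitrary strict temporal path $P$ from $a'$ to $b'$: its first transition is forced to be $\trans{a'}{2i_1-1}{a_{i_1}}$ for some $i_1\in[p]$; having entered an $a_i$ at time $2i-1$, the only strictly later incident time edges are the forward edges at time $2i$, so $P$ must continue along $\tranz{2i}{b_j}$ with $\{a_i,b_j\}\in E(G)$; having entered a $b_j$ at an even time $2i$, the only strictly later incident time edges are the back edges $(\{a_{i'},b_j\},2i'-1)$ with $i'>i$ and the edge $(\{b',b_j\},2p+1)$. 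Iterating this, $P$ necessarily has the form
\[
\trans{a'}{2i_1-1}{a_{i_1}}\tranz{2i_1}{b_{j_1}}\tranz{2i_2-1}{a_{i_2}}\tranz{2i_2}{b_{j_2}}\cdots\tranz{2i_k}{b_{j_k}}\tranz{2p+1}{b'}
\]
with $i_1<i_2<\dots<i_k$ and $\{a_{i_\ell},b_{j_\ell}\}\in E(G)$ for all $\ell$. Because the indices are strictly increasing, the vertices $a_{i_1},\dots,a_{i_k}$ are automatically pairwise distinct, so the walk $P$ is a \emph{path} exactly when $b_{j_1},\dots,b_{j_k}$ are pairwise distinct too; in that case $\{\{a_{i_1},b_{j_1}\},\dots,\{a_{i_k},b_{j_k}\}\}$ is a non-empty matching of $G$. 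Conversely, every non-empty matching, written with its $A$-endpoints in increasing index order, yields exactly one path of the above form. Hence the number of strict temporal paths from $a'$ to $b'$ equals the number of matchings of $G$ minus one (the empty matching), which is a valid polynomial-time counting reduction; combined with the routine membership of \problem{Strict Temporal Paths} in \SP, this yields \SP-completeness.

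The step I expect to be the main obstacle is making the structural claim about $P$ fully rigorous: a careful case analysis on incident time edges showing that the $\pm 1$ offset between ``forward'' edges (time $2i$) and the edges entering an $a_i$ (time $2i-1$) forces every visited $a_i$ to be entered at time $2i-1$ and left at time $2i$ — in particular, entering an $a_i$ at time $2i$ via a forward edge leaves $P$ with no strictly later incident edge, so $P$ would be stuck before reaching $b'$ — and symmetrically that every visited $b_j$ is entered at an even time and left either to a later $a_{i'}$ (via a back edge) or to $b'$, so that a back edge is never traversed from the $A$-side and $b'$ is reached only by a final $b_j\to b'$ transition at time $2p+1$. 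Once this is nailed down, the remaining points — that the path-versus-walk distinction is exactly what forces the matched $B$-vertices $b_{j_1},\dots,b_{j_k}$ to be pairwise distinct, and the harmless off-by-one caused by the empty matching — are straightforward.
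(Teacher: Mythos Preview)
Your proposal is correct and follows essentially the same reduction as the paper: the same gadget from \cref{figure:gadgetmatching} (with the harmless difference that you also include back edges for $i=1$, which can never be traversed anyway), and the same bijection between non-empty matchings of $G$ and strict temporal $a'$--$b'$ paths. Your structural analysis of the forced path shape matches the paper's argument that forward-edges must alternate with back-edges, and your handling of the empty-matching off-by-one is equivalent to the paper's phrasing in terms of non-empty matchings.
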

\begin{proof}
We reduce from the \SP-complete problem \problem{Imperfect
Matchings}~\cite{valiant_complexity_1979}. Given a bipartite graph $G=(A \cup B,
E)$, we construct a temporal graph $\TG = (A \cup B \cup \{a',b'\}, \E, T)$ such
that the number of \emph{non-empty} matchings in $G$ is equal to the number of
strict temporal paths from~$a'$ to~$b'$ in $\TG$. An example for the
transformation is shown in \cref{figure:gadgetmatching} and it is easy to
check that the transformation can be computed in polynomial time.

The temporal edge set $\E$ is constructed as follows: For each edge
$\{a_i,b_j\} \in E$, we create a temporal edge~$(\{a_i,b_j\},2\cdot i)$. These
edges are meant to represent the edges of the original graph and will be called
\emph{forward-edges}.
The vertices $a_i \in A$ are connected to~$a'$ at time step $2\cdot i-1$.
All~$b_j \in B$ are connected to $b$ at the last time step~$T$. For~$i>1$ we
connect each~$a_i$ to each $b_j$ at time step $2\cdot i -1$; these edges are
drawn dashed in orange in \cref{figure:gadgetmatching} and we will refer to them
as \emph{back-edges}. 

We justify the terms \emph{forward-edge} and \emph{back-edge} by showing that
for any temporal path from~$a'$ to $b'$, every transition with an even time
label goes from an $a\in A$ to a $b \in B$ (\emph{forward}) and exactly the
other way from a $b\in B$ to an $a \in A$ (\emph{back}) for every transition
with an odd time label. By construction, each vertex $a_i\in A$ is incident to time edges with at most two time
labels:~$2\cdot i - 1$ and~$2\cdot i$. Hence, any temporal path containing a
transition $\trans{b_j}{2\cdot i}{a_i}$ ends in $a_i$ since no time edge with a
higher time label will be available. By an anologous argument, back-edges
cannot be used forwardly because it is impossible to arrive in $a_i$ before time
$2\cdot i-1$.

As a consequence, on any temporal path from $a'$ to $b'$, every back-edge is
followed by a forward-edge and every forward-edge is followed either by a back-edge or by the
final edge to $b'$. Thus, for any matching $M =
\{\{a_{i_1},b_{j_1}\},\dots,$ $\{a_{i_m},b_{j_m}\}\}$ of size $m \in \NN^+$
there is exactly one temporal path from $a'$ to $b'$ containing exactly the
forward edges corresponding to $M$, and conversely, for each temporal path $P$
from $a'$ to $b'$ there is exactly one matching corresponding to the forward-edges in~$P$. Thus, the
number of non-empty matchings in~$G$ equals the number of temporal paths
from~$a'$ to~$b'$ in~$\TG$ which implies that we have a polynomial-time
counting reduction.
\end{proof}

Analogously to the case of non-strict paths, the \SP-hardness of \problem{Strict temporal paths} implies the \SP-hardness of \problem{Strict Foremost Paths} and \problem{Strict Fastest Paths}.

\begin{corollary}\label{corollary:hardness foremost}
\problem{Strict Foremost Paths} and \problem{Strict Fastest Paths} are \SPC.
\end{corollary}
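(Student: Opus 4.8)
The plan is to recycle the temporal graph $\TG$ built in the reduction from \problem{Imperfect Matchings} in the preceding theorem. The key observation is that this $\TG$ already has the property that \emph{every} temporal path from $a'$ to $b'$ is foremost, and that a tiny modification makes every such path fastest as well.

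For \problem{Strict Foremost Paths}: in the construction the only time edges incident to $b'$ are the edges $\{b_j,b'\}$ with $b_j \in B$, and all of them carry the time label $T$. Hence every temporal walk from $a'$ to $b'$, and in particular every temporal path, ends with such a transition and therefore arrives at $b'$ exactly at time step $T$; since $T$ is the lifetime of $\TG$, no walk can arrive earlier, so every strict temporal path from $a'$ to $b'$ is foremost. Thus the number of foremost strict temporal paths from $a'$ to $b'$ coincides with the number of all strict temporal paths from $a'$ to $b'$, which by the theorem equals the number of non-empty matchings of $G$; this yields the required polynomial-time counting reduction.

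For \problem{Strict Fastest Paths}: since the arrival times are already uniform (all equal to $T$), it suffices to make the departure times uniform as well. I would increase the time label of every time edge of $\TG$ by one, add a new vertex $s$, and add the single time edge $\{s,a'\}$ with time label $1$, obtaining a temporal graph $\TG'$ of lifetime $T+1$. Because $s$ has degree one, every temporal path from $s$ to $b'$ in $\TG'$ consists of the transition $\trans{s}{1}{a'}$ followed by a temporal path from $a'$ to $b'$ in the relabelled copy of $\TG$; relabelling is a bijection on temporal paths, and the original first transition out of $a'$ now carries a time label of at least $2$, so strictness is preserved and these paths correspond bijectively to the temporal paths from $a'$ to $b'$ in $\TG$, hence to the non-empty matchings of $G$. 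Every such path leaves $s$ at time $1$ and arrives at $b'$ at time $T+1$, so it has duration exactly $T$, and no temporal walk from $s$ to $b'$ can do better; therefore every strict temporal path from $s$ to $b'$ is fastest, and counting them counts the non-empty matchings of $G$.

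Both reductions are almost immediate once these structural facts about $\TG$ are noted; the only steps needing (routine) care are checking that attaching the degree-one vertex $s$ and uniformly shifting all time labels neither creates new temporal paths from $s$ to $b'$ nor disturbs the bijection with matchings, and that strictness of every relevant temporal path is maintained.
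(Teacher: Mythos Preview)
Your argument is correct. For \problem{Strict Foremost Paths} it is exactly the observation the paper has in mind: in the instance produced by the \problem{Imperfect Matchings} reduction, every edge incident to~$b'$ has time label~$T$, so every strict temporal path from~$a'$ to~$b'$ arrives at time~$T$ and is therefore foremost; hence the number of strict foremost paths equals the number of all strict temporal paths, which equals the number of non-empty matchings.

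For \problem{Strict Fastest Paths} your proof actually goes a little beyond what the paper writes. The paper's entire ``proof'' is the sentence ``Analogously to the case of non-strict paths, the \#P-hardness of \problem{Strict Temporal Paths} implies the \#P-hardness of \problem{Strict Foremost Paths} and \problem{Strict Fastest Paths}.'' Read literally as ``every path in the constructed instance is also fastest,'' this would be wrong: in the matching gadget the edges from~$a'$ carry labels $1,3,5,\dots$, so paths depart at different times and not all of them are fastest. Your fix---shifting all labels up by one and attaching a fresh degree-one source~$s$ at time~$1$---forces a common departure time while leaving arrivals uniform, so every strict $s$--$b'$ path has the same duration and the bijection with non-empty matchings is preserved. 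This is the right way to make the ``analogously'' precise, and it is also the idea the paper itself uses later (adding pendant vertices at times~$1$ and~$T+2$) in the proof of \cref{thm:hardbetweenness}.
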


We have shown that counting strict and non-strict temporal paths is \SPC. This allows us to prove this section's main result. %We defer the proof to \cref{app:proof}.

\begin{theorem}\label{thm:hardbetweenness}
\problem{Temporal Betweenness} based on foremost or fastest, strict or non-strict paths is \SPC.
\end{theorem}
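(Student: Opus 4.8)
The plan is to give a polynomial-time counting reduction from \problem{Temporal Paths} (respectively \problem{Strict Temporal Paths}) --- both \SPC\ by \cref{prop:temppaths} and the \SP-hardness of \problem{Strict Temporal Paths} shown above --- to \problem{Temporal Betweenness} based on foremost (respectively fastest) paths, handling the strict and non-strict variants together. Starting from a temporal graph $\TG=(V,\E,T)$ and $s,z\in V$, I would build a temporal graph $\TG'$ with a distinguished vertex $v$ whose foremost (respectively fastest) temporal betweenness in $\TG'$ determines the number $M$ of (strict) temporal $s$-$z$-paths in $\TG$. Since temporal reachability is decidable in polynomial time, I may assume $M\geq 1$; otherwise $A_{s,z}=0$ and $M=0$.

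For the construction I would shift every time label of $\TG$ up by one (so its time edges now lie in $\{2,\dots,T+1\}$), add three new vertices $s^*,v,z^*$, and add the time edges $(\{s^*,s\},1)$, $(\{s^*,v\},1)$, $(\{z,z^*\},T+2)$, and $(\{v,z^*\},T+2)$, giving $\TG'$ lifetime $T+2$. The design point is that every edge incident with $s^*$ has time label $1$ and every edge incident with $z^*$ has time label $T+2$, so \emph{every} temporal $s^*$-$z^*$-path of $\TG'$ departs at time $1$ and arrives at time $T+2$ --- hence all of them are simultaneously foremost and fastest.

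The argument then has three steps. (i) A bijection: the temporal $s^*$-$z^*$-paths of $\TG'$ are exactly the extensions $\trans{s^*}{1}{s}\to P\to\;\trans{z}{T+2}{z^*}$ of the temporal $s$-$z$-paths $P$ of $\TG$, plus the single path $(\trans{s^*}{1}{v}\tranz{T+2}{z^*})$ through $v$; one has to check the strict/non-strict time-label conditions for the glued edges and use that $v$ is reachable only via $s^*$ and reaches only $z^*$. Hence $\sigmageneric{s^*z^*}=M+1$ and $\sigmageneric{s^*z^*}(v)=1$, so the ordered pair $(s^*,z^*)$ contributes $1/(M+1)$ to $C^{(\star)}_B(v)$. (ii) A characterisation of which ordered vertex pairs have $v$ on an optimal path between them: any path with $v$ internal must contain the subpath $\trans{s^*}{1}{v}\tranz{T+2}{z^*}$, so --- since $s^*$ can be entered only from $s$ and $z^*$ can be left only towards $z$ --- such a path starts in $\{s^*,s\}$ and ends in $\{z^*,z\}$; in the strict case the subpath already forces two consecutive edges of label $T+2$, leaving only the pair $(s^*,z^*)$. (iii) For the three remaining pairs $(s^*,z),(s,z^*),(s,z)$ in the non-strict case, one checks whether the unique $v$-internal path joining them is foremost / fastest: its arrival time is $T+2$ and its duration is $T+1$, which the route that enters $\TG$ at $s$ strictly improves upon (arrival $\leq T+1$, duration $\leq T$) as soon as $M\geq 1$, except that the arrival time $T+2$ is merely tied on the pair $(s,z^*)$. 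Hence for the fastest variant all three pairs contribute $0$, while for the foremost variant only $(s,z^*)$ survives and contributes a further $1/(M+1)$ by the same bijection. In total $C^{(\star)}_B(v)=c/(M+1)$, with $c=1$ for every fastest variant and for strict foremost, and $c=2$ for non-strict foremost, so $M$ --- and thus the answer to the original instance --- is recoverable in polynomial time.

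The main obstacle I expect is exactly steps (ii) and (iii): since $\TG'$ is undirected and temporal paths are ordered only by time, one cannot simply declare $s^*$ a source --- in the non-strict case a path may traverse $\{s^*,s\}$ as $s\to s^*$ and then continue to $v$, so paths starting at $s$ can also pass through $v$. One must argue that such paths do not exist in the strict case, that they are foremost but contribute only in the controlled way above in the non-strict foremost case, and that they are not fastest in the fastest case. Fixing the constant $c$, keeping the strict versus non-strict bookkeeping straight, and verifying the time-label inequalities for all glued edges in both regimes is where the care is needed; everything else is routine.
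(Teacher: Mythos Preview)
Your reduction is exactly the paper's construction (the paper calls the new vertices $a',v',b'$ and the pendant edges have the same time labels $1$ and $T+2$), so the approach is the same. Your case analysis is in fact more careful than the paper's: the paper asserts that all four pairs $(a',b'),(a,b'),(a',b),(a,b)$ contribute $1/(p+1)$ in every variant, whereas you correctly observe that in the strict case only $(s^*,z^*)$ has a path with $v$ internal, and that for non-strict fastest the detour through $v$ is strictly slower for the three non-$(s^*,z^*)$ pairs once $M\ge 1$, so the recovery constant is $c\in\{1,2\}$ rather than $4$. Either way one recovers $M$ in polynomial time, so the hardness follows.
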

\begin{proof}
We prove the \SP-hardness by a polynomial-time counting reduction from
\problem{(Strict) Temporal Paths}.
Let $\TG=(V,\E,T)$ be a temporal graph with vertices $a$ and $b$. Let $p$ be the
number of (strict) temporal paths from~$a$ to~$b$.
We construct a temporal graph $\TG'=(V',\E',T')$ with $V' = V \cup
\{a',b',v'\}$, lifetime~$T' = T+2$, and $\E' = \{ \tedge{u}{t+1}{v} \mid
\tedge{u}{t}{v} \in \E \} \cup
\{\tedge{a'}{1}{a},\tedge{a'}{1}{v'},\tedge{v'}{T+2}{b'},\tedge{b}{T+2}{b'}\}$.
The construction is illustrated in Figure \ref{figure:gadgetbetweenness} and
can clearly be computed in polynomial time.

\begin{figure}[t]
	\center
	\begin{tikzpicture}[scale=.9]
		\path
			(2,2) node[circle, draw](1) {$a$}
			(6,2) node[circle, draw](2) {$b$}
			(-2,2) node[circle, draw](3) {$a'$}
			(10,2) node[circle, draw](4) {$b'$}
			(4,0) node[circle, draw](5) {$v'$};
		\draw [line width=.8mm, decorate,decoration={snake,amplitude=2mm,segment length=9mm}] 
			  (1) -- (2) node [midway, above = 5pt] {$[2,T+1]$};
		\draw (3) -- (1) node [midway, above = 3pt] {1};
		\draw (2) -- (4) node [midway, above = 3pt] {$T+2$};
		\path (3) edge [bend right=15] node [midway, below = 3pt] {1} (5);
		\path (5) edge [bend right=15] node [midway, below = 3pt] {$T+2$}  (4);
		\draw[rounded corners=18pt] (1,1) -- (1,3) -- (2,4) -- (4,3) -- (5.5,3.5) --
		(7,3) -- (7,1) -- (5,1.5) -- (3,0.5) -- cycle;
	\end{tikzpicture}
	\caption{Given a temporal graph $\TG$, we construct a temporal graph $\TG'$ such that we can compute the number of $a$-$b$-paths in $\TG$ from the betweenness of $v$ in $\TG'$.}
	\label{figure:gadgetbetweenness}
\end{figure}
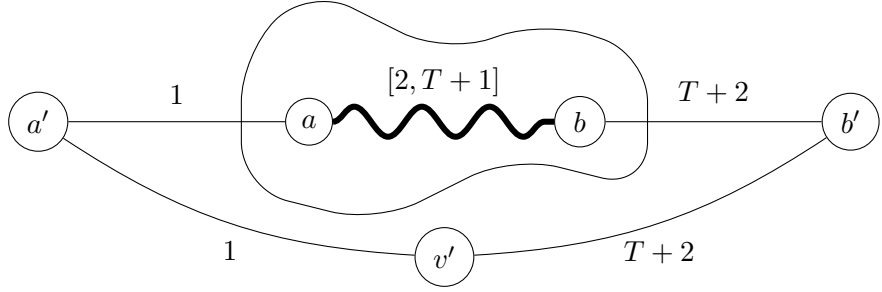

By construction, every temporal path from~$a'$ to~$b'$ is both fastest and
foremost and there are exactly four (fastest/foremost) temporal paths going
through $v'$, connecting the pairs $(a,b)$, $(a,b')$, $(a',b)$, and~$(a',b')$, respectively:
\begin{enumerate}
	\item $(\trans{a'}{1}{v'}\tranz{T+2}{b'})$,
	\item $(\trans{a}{1}{a'}\tranz{1}{v'}\tranz{T+2}{b'})$,
	\item $(\trans{a}{1}{a'}\tranz{1}{v'}\tranz{T+2}{b'}\tranz{T+2}{b})$, and
	\item $(\trans{a'}{1}{v'}\tranz{3}{b'}\tranz{T+2}{b})$.
\end{enumerate}
We observe that for each of these four pairs, there are $p+1$ foremost (and
fastest) (strict) temporal paths, and for any other pair of vertices, no
temporal path goes through~$v$ at all.
This allows us to compute~$p$ from the temporal betweenness centrality based on
foremost (fm) or fastest (fa) temporal paths of~$v$:

\begin{align*}
\bcfa(v) = \bcfm(v) &= \sum_{s \neq v \neq z} \pdforemost{sz}{v}
	  = 4\cdot\pdforemost{ab}{v} = \frac{4}{p+1}\\
	  \Rightarrow p &= \frac{4}{\bcfm(v)}-1.
\end{align*}
This implies that we have a polynomial-time
counting reduction.
\end{proof}

\section{Counting Temporal Paths and Temporal Dependencies}
\label{ch:adaptation}
In the previous section, we have shown that computing the temporal betweenness
centrality based on fastest or foremost paths is \SPC.
%, and we have identified properties of those path classes that are fundamentally different from static shortest paths, making them harder to count. 
%There are, however, other concepts of optimal paths that are more promising, including shortest paths and specialized combinations of optimality.
Now we investigate ways to extend the approach of Brandes' algorithm
\cite{brandes_faster_2001} for static graphs to variants of temporal
betweenness based on types of optimal temporal paths for which the
counting problem is not intractable. In this section, we provide the theoretical
foundations for designing efficient algorithms to compute temporal betweenness centrality.

\subsection{Counting Shortest Temporal Paths}
	\label{sec:brandes-temp-shortest}
We first show how we can count shortest temporal paths efficiently. Our results
from \cref{ch:hardness} indicate that this is not possible for the case of
foremost and fastest temporal paths.

As it turns out, what we need to count is a slightly different version of
shortest temporal paths, which we call \emph{$t$-shortest} temporal paths.
A temporal path $P$ from $s$ to $z$ is a $t$-shortest temporal path if it
arrives at time $t$ and if there is no temporal path $P'$ from $s$ to $z$ which arrives
at time $t$ and is shorter than $P$.
%To avoid confusion, we call shortest temporal paths from $s$ to $z$ \emph{all-time shortest}.
We show that similarly to the static case, $t$-shortest temporal paths $P$ have
the property that every prefix is a $t'$-shortest temporal path, where $t'$ is
the time when the prefix arrives at its last vertex. Formally, we show
the following.
\begin{lemma}%[Prefix property for shortest temporal paths]
\label{lemma:prefix shortest}
%Let $\TG=(V,\E,T)$ be a temporal graph. 
Let $P = (\trans{s}{t_1}{\dots}\tranz{t'}{v}\tranz{t''}{s'}\tranz{t'''}{\ldots}\tranz{t}{z})$ be a $t$-shortest temporal path from $s$ to $z$. Then $P' =(\trans{s}{t_1}{\dots}\tranz{t'}{v})$ is a $t'$-shortest temporal path from $s$ to $v$.
\end{lemma}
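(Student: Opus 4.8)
The plan is to transfer the optimal-substructure argument for static shortest paths to the temporal setting and argue by contradiction. Suppose $P'$ is not a $t'$-shortest temporal path from $s$ to $v$; then there is a temporal path $Q$ from $s$ to $v$ that also arrives at $v$ at time $t'$ but uses strictly fewer transitions than $P'$. Write $P = P' \cdot P_2$, where $P_2 = (\trans{v}{t''}{s'}\tranz{t'''}{\ldots}\tranz{t}{z})$ is the suffix of $P$ from $v$ to $z$; it arrives at $z$ at time $t$, and its number of transitions is $\abs{P} - \abs{P'}$ (writing $\abs{\cdot}$ for the number of transitions). The idea is to exchange the prefix: glue $Q$ and $P_2$ at $v$ to form $W := Q \cdot P_2$. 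Because the last transition of $Q$ has time label $t'$ and the first transition of $P_2$ has time label $t'' \ge t'$ --- with $t'' > t'$ when $P$, hence $P_2$, is strict --- the sequence $W$ is a valid (strict, if $P$ is strict) temporal walk from $s$ to $z$; it arrives at $z$ at time $t$, and it has $\abs{Q} + \abs{P_2} < \abs{P'} + \abs{P_2} = \abs{P}$ transitions. If $W$ is a temporal path, it already contradicts the $t$-shortness of $P$, and we are done.

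The main obstacle is that $W$ need not be a temporal path: although $Q$ and $P_2$ are individually paths, $Q$ may revisit a vertex that also lies on $P_2$, creating a cycle in $W$. I would repair this by short-cutting $W$ to a genuine temporal path while neither increasing its length nor disturbing its arrival time at $z$. Let $v = w_0, w_1, \ldots, w_\ell = z$ be the vertex sequence of $P_2$ and let $j$ be the least index with $w_j$ on $Q$ (well defined, since $w_0 = v$ lies on $Q$). Let $Q'$ be the prefix of $Q$ ending at the first occurrence of $w_j$, and let $\hat P$ consist of $Q'$ followed by the suffix $w_j, w_{j+1}, \ldots, w_\ell$ of $P_2$, keeping the time labels that $P_2$ assigns to those transitions. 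By minimality of $j$, none of $w_{j+1}, \ldots, w_\ell$ occurs on $Q'$, so $\hat P$ visits each vertex at most once; every transition of $Q'$ has label at most $t'$ while every retained transition of $P_2$ has label at least $t''$, so $\hat P$ is still a valid (strict) temporal walk; and since $\hat P$ retains the final transition of $P_2$, it arrives at $z$ at time $t$. As $\abs{\hat P} \le \abs{Q} + \abs{P_2} < \abs{P}$, this contradicts the $t$-shortness of $P$.

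I expect the genuinely delicate points to be two. First, one must keep careful track of the strict-versus-non-strict inequalities at every splice (the chain $\mathrm{label}(Q') \le t' \le t'' \le \mathrm{label}(\text{kept transition})$, sharpened in the strict case). Second, and more seriously, there is the boundary case $j = \ell$, i.e.\ when $z$ itself already appears on $Q$ so the retained suffix of $P_2$ is empty: then $\hat P$ is merely a short temporal path into $z$ arriving at some time $\le t'$, not necessarily at $t$, and one has to argue separately --- now crucially using that $P$, and not merely its prefix $P'$, is $t$-shortest (for instance by re-routing through the last vertex of $P$ before $z$) --- that such a $Q$ still forces a shorter $s$-$z$ path arriving at exactly $t$. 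This case is where the real work of the proof lies; the remainder is the routine prefix-optimality bookkeeping familiar from the static setting, together with the trivial base case $P' = (s)$ (so $v = s$, $t' = 0$).
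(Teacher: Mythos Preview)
Your instinct to worry about whether the spliced object $Q \cdot P_2$ is a temporal \emph{path} rather than merely a walk is correct, and it already goes beyond the paper's own proof, which simply writes down the concatenation $(P''\tranz{t''}{s'}\tranz{t'''}{\ldots}\tranz{t}{z})$ and calls it shorter than~$P$ without ever checking that no vertex is repeated. A minor slip in your repair: you need $j$ to be the \emph{largest} index with $w_j$ on~$Q$, not the least; with ``least'' you always get $j=0$ (since $w_0=v$ is the endpoint of~$Q$), and your claim ``by minimality of $j$, none of $w_{j+1},\dots,w_\ell$ occurs on $Q'$'' only follows from \emph{maximality}.

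The substantive gap is the case you flag yourself, $j=\ell$, and it cannot be closed: the lemma as stated is false. Take vertices $s,a,b,v,z$ and time edges $\{s,a\}$ at~$1$, $\{a,b\}$ at~$2$, $\{b,v\}$ at~$3$, $\{v,z\}$ at~$4$, $\{s,z\}$ at~$1$, and $\{z,v\}$ at~$3$. Then $P = (\trans{s}{1}{a}\tranz{2}{b}\tranz{3}{v}\tranz{4}{z})$ is the \emph{only} temporal path (strict or non-strict) from $s$ to $z$ arriving at time~$4$: any such path must end with $\trans{v}{4}{z}$, and the only way to reach $v$ without visiting $z$ is via $a,b$. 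Hence $P$ is $4$-shortest. Yet its prefix $P' = (\trans{s}{1}{a}\tranz{2}{b}\tranz{3}{v})$ is \emph{not} $3$-shortest from $s$ to $v$, because $Q = (\trans{s}{1}{z}\tranz{3}{v})$ also arrives at $v$ at time~$3$ and has length~$2<3$. Your suggested re-routing through $w_{\ell-1}=v$ cannot help here, since the only $z$-avoiding $s$--$v$ path is $P'$ itself. So the obstruction you sensed in the $j=\ell$ case is intrinsic, not an artifact of your argument; neither your proof nor the paper's can be completed for the lemma exactly as stated.
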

\begin{proof}
Assume that $P = (\trans{s}{t_1}{\dots}\tranz{t}{z})$ is a $t$-shortest temporal
path but $P' =(\trans{s}{t_1}{\dots}\tranz{t'}{v})$ is not a $t'$-shortest temporal path. Then there is a shorter temporal path $P'' = (\trans{s}{t_1'}{\dots}\tranz{t'}{v})$ which implies that $(P''\tranz{t''}{s'}\tranz{t'''}{\ldots}\tranz{t}{z})$ is shorter than $P$, a contradiction.
\end{proof}
\cref{lemma:prefix shortest} allows us to formulate a recursive relation of the
number of shortest temporal paths. To do this, we need to define the
predecessor of a vertex appearance~$(v,t)$ on a temporal path.

\begin{definition}\label{def:predecessor}
Let $P$ be a temporal starting at $s$ that contains the transitions
$\trans{w'}{t'}{w}\tranz{t}{v}$.
Then $(w,t')$ is the \emph{predecessor} of $(v,t)$ on~$P$. Let $P$ also contain the transition $\trans{s}{t''}{v'}$. Then $(s,0)$ is the predecessor of $(v',t'')$.

Let $s\in V$ be any vertex and~$\va{v}{t} \in V\times[T]$ any vertex
appearance. Then $P^{(\star)}_{s}(v,t)$ denotes the \emph{set of predecessors}
of $(v,t)$ on $\star$-optimal temporal paths starting at $s$.
%\begin{itemize}
%	\item $P^{\textup{(sh)}}_{s}(v,t)$ for the predecessors of $(v,t)$ on shortest paths.
%	\item $P^{\textup{(fm)}}_{s}(v,t)$ for the predecessors of $(v,t)$ on foremost paths.
%	\item $P^{\textup{(fa)}}_{s}(v,t)$ for the predecessors of $(v,t)$ on fastest paths.
%\end{itemize}
\end{definition}

If we want to use the predecessor relation to formulate recursive function, then
we need that the relation is acyclic, otherwise we are not guaranteed to reach
a base case.
To show this formally, we define the directed \emph{predecessor graph} for $\star$-optimal temporal paths starting at some vertex $s\in V$ as follows.
Given a temporal graph $\TG$, the vertex set of the predecessor graph is the set of \emph{vertex appearances} in $\TG$. The arc set is given by the ordered pairs of vertex appearances such that there is a $\star$-optimal temporal path starting at $s$ that arrives in these vertex appearances in that order, that is, we add an arc from a vertex appearance $(v,t)$ to another vertex appearance~$(w,t')$ if $(v,t)\in P^{(\star)}_{s}(w,t')$.
\begin{definition}[Predecessor Graph]
Let $\TG=(V,\E,T)$ be a temporal graph. Let $s \in V$. The \emph{predecessor graph} of $\TG$ is the directed static graph given by
\[ 
	G^{(\star)}_{\operatorname{pre}}(s) := (V \times ([T]\cup\{0\}), E),
\]
where 
\begin{align*}
	E := \{((v,t),(w,t')) \mid (v,t)\in P^{(\star)}_{s}(w,t')\}.
\end{align*}
\end{definition}
It is easy to observe that the predecessor graph is acyclic for all strict optimal temporal path versions, since the time stamps of the vertex appearances are strictly increasing along the arcs of the graph. 
%An example for the construction for non-strict shortest temporal paths is shown in \cref{figure:pathgraph}. In the example, the predecessor graph is acyclic. 
In the case of $t$-shortest temporal paths, we show next that this also holds in
the non-strict case.
\begin{lemma}%[Acyclic shortest paths]
\label{lemma:acyclictshortest}
Let $\TG=(V,\E,T)$ be a temporal graph. The predecessor graph $G^{\textup{(n-}t\textup{-sh)}}_{\operatorname{pre}}(s)$ for non-strict $t$-shortest (n-$t$-sh) temporal paths is acyclic for all $s\in V$ and all $t\in T$.
\end{lemma}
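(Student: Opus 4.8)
The plan is to argue by contradiction: suppose there is a directed cycle in $G^{\textup{(n-}t\textup{-sh)}}_{\operatorname{pre}}(s)$, say on vertex appearances $(v_0,t_0),(v_1,t_1),\ldots,(v_{k-1},t_{k-1}),(v_k,t_k)=(v_0,t_0)$, where each consecutive pair $((v_i,t_i),(v_{i+1},t_{i+1}))$ is an arc, i.e.\ $(v_i,t_i)\in P^{\textup{(n-}t\textup{-sh)}}_{s}(v_{i+1},t_{i+1})$. First I would record what an arc means concretely: by \cref{def:predecessor}, if $(v_i,t_i)$ is a predecessor of $(v_{i+1},t_{i+1})$ on some non-strict $t$-shortest path $P$, then $P$ uses the transition $\trans{v_i}{t_{i+1}}{v_{i+1}}$ (so the time edge $(\{v_i,v_{i+1}\},t_{i+1})$ exists), the prefix of $P$ arriving at $(v_i,t_i)$ is a $t_i$-shortest path from $s$ to $v_i$ by \cref{lemma:prefix shortest}, and since $P$ is a temporal walk we have $t_i\le t_{i+1}$. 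Hence along the whole cycle $t_0\le t_1\le\cdots\le t_k=t_0$, which forces $t_0=t_1=\cdots=t_k$; call this common value $\tau$.

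Next I would bring in the length of shortest prefixes. Let $d_i$ denote the number of transitions in a $t_i$-shortest (i.e.\ $\tau$-shortest) temporal path from $s$ to $v_i$; this is well defined because the arc $((v_i,t_i),(v_{i+1},t_{i+1}))$ witnesses that such a path exists. From the arc $(v_i,\tau)\in P^{\textup{(n-}\tau\textup{-sh)}}_{s}(v_{i+1},\tau)$: there is a $\tau$-shortest path to $v_{i+1}$ whose predecessor of the last appearance is $(v_i,\tau)$, so removing its last transition yields a temporal path from $s$ to $v_i$ arriving at time $\tau$; by minimality of $d_i$ this has at least $d_i$ transitions, hence the $\tau$-shortest path to $v_{i+1}$ has at least $d_i+1$ transitions, i.e.\ $d_{i+1}\ge d_i+1$. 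Chaining this around the cycle gives $d_0\ge d_0+k$, hence $k\le 0$, contradicting that we have a genuine cycle ($k\ge 1$, in fact $k\ge 2$ since the predecessor graph has no loops). Therefore no cycle exists and the predecessor graph is acyclic.

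The one place that needs a little care — and which I expect to be the main (minor) obstacle — is the second step: from the abstract fact ``$(v_i,\tau)$ is a predecessor of $(v_{i+1},\tau)$ on some $\tau$-shortest path'' one must correctly extract that deleting the final transition of that witnessing path leaves a $temporal$ path (not merely a walk) from $s$ to $v_i$ arriving at $\tau$, so that $d_{i+1}=d_i+1$ rather than just $d_{i+1}\ge d_i+1$; the strict inequality $d_{i+1}>d_i$ is all that is actually needed, and it follows because a prefix of a temporal path is a temporal path and \cref{lemma:prefix shortest} identifies that prefix as $\tau$-shortest to $v_i$, so it has exactly $d_i$ transitions and the full path has exactly $d_i+1$. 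I would also note explicitly at the start that if $k=1$ the ``cycle'' would be a self-loop $((v_0,\tau),(v_0,\tau))$, which is impossible since a predecessor of $(v_0,\tau)$ lies on a path that continues to $v_0$ via a further transition, so $v_0$ would be visited twice — contradicting that the witnessing object is a path.
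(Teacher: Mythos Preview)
Your proof is correct and follows essentially the same approach as the paper: both argue by contradiction using a strictly increasing potential along arcs of the predecessor graph, namely the length of a $t'$-shortest temporal path from $s$ to the vertex appearance $(v,t')$, and both rely on \cref{lemma:prefix shortest} to establish that this length increases by exactly one along each arc. Your preliminary step showing that all time stamps on a putative cycle coincide is correct but unnecessary---the length argument alone already yields the contradiction without fixing a common $\tau$---and the paper's proof omits it accordingly.
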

\begin{proof} Proof by contradiction. 
Let $\TG=(V,\E,T)$ be a temporal graph with source $s$. 
Let~$\dist:V\times[T] \rightarrow \NN$ be the function that returns the length of the shortest path in the predecessor graph $G^{\textup{(n-}t\textup{-sh)}}_{\operatorname{pre}}(s)$ from $(s,0)$ to any vertex appearance. Note that for any vertex appearance $(v,t')$ we have that $\dist(v,t')$ is also the length of a $t'$-shortest temporal path from $s$ to $v$. Hence, for any vertex appearances $(v,t')$ and $(w,t'')$, if the predecessor graph contains the arc $((v,t'),(w,t''))$, then $\dist(v,t') < \dist(w,t'')$.
Assume that $G^{\textup{(n-}t\textup{-sh)}}_{\operatorname{pre}}(s)$ contains a cycle 
$C = ((v,t'),\dots,(v,t'))$. 
Then $\dist(v,t') < \dist(v,t')$, a contradiction.
\end{proof}
By an analogous argument, we get the following, which we need in the next subsection.
\begin{lemma}%[Acyclic shortest paths]
\label{lemma:acyclicshortest}
Let $\TG=(V,\E,T)$ be a temporal graph. The predecessor graph $G^{\textup{(n-sh)}}_{\operatorname{pre}}(s)$ for non-strict shortest temporal paths is acyclic for all $s\in V$.
\end{lemma}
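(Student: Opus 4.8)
The plan is to mimic the proof of \cref{lemma:acyclictshortest} almost verbatim, replacing the per-arrival-time distance function by a single global distance function on vertex appearances. First I would define $\dist\colon V\times([T]\cup\{0\})\to\NN$ to be the length of a shortest path in $G^{\textup{(n-sh)}}_{\operatorname{pre}}(s)$ from $(s,0)$ to the given vertex appearance (with $\dist(s,0)=0$), and I would argue that for any vertex appearance $(v,t)$ reachable in the predecessor graph, $\dist(v,t)$ equals the length of a $t$-shortest temporal path from $s$ to $v$. This is the step that needs a little care: one direction follows because every path in the predecessor graph from $(s,0)$ to $(v,t)$ corresponds, by \cref{def:predecessor}, to a temporal path from $s$ to $v$ arriving at time $t$; the other direction follows from \cref{lemma:prefix shortest}, since every prefix of a $t$-shortest temporal path is itself a $t'$-shortest temporal path and hence its predecessor relationship is recorded in the predecessor graph, so an optimal temporal path of length $\dist(v,t)$ gives a path of that length in $G^{\textup{(n-sh)}}_{\operatorname{pre}}(s)$.

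Next I would observe the key monotonicity property: whenever $G^{\textup{(n-sh)}}_{\operatorname{pre}}(s)$ contains the arc $((v,t'),(w,t''))$, we have $\dist(v,t')<\dist(w,t'')$. Indeed, such an arc means $(v,t')\in P^{(\star)}_{s}(w,t'')$, so there is a shortest temporal path from $s$ to $w$ arriving at $t''$ whose last transition goes from $v$ (arriving at $t'$) to $w$; its prefix ending at $(v,t')$ has length $\dist(w,t'')-1$ and is, by \cref{lemma:prefix shortest}, a $t'$-shortest temporal path, so $\dist(v,t')\le\dist(w,t'')-1<\dist(w,t'')$. Note this is where I implicitly use that all arcs of the predecessor graph lie on vertex appearances reachable from $(s,0)$, which is immediate from \cref{def:predecessor} (predecessors are defined only along temporal paths starting at $s$).

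Finally I would close the argument by contradiction exactly as in \cref{lemma:acyclictshortest}: a cycle $C=((v_0,t_0),(v_1,t_1),\dots,(v_k,t_k)=(v_0,t_0))$ in $G^{\textup{(n-sh)}}_{\operatorname{pre}}(s)$ would force $\dist(v_0,t_0)<\dist(v_1,t_1)<\cdots<\dist(v_0,t_0)$ by repeated application of the monotonicity property, which is impossible. The only real obstacle is making the first step precise — namely that $\dist$ on the predecessor graph genuinely coincides with the combinatorial notion of ``length of a $t$-shortest temporal path'' — but since \cref{lemma:prefix shortest} already packages the needed optimal-substructure property, this should be routine; indeed the lemma statement itself remarks that the proof is ``analogous'' to that of \cref{lemma:acyclictshortest}, so I expect the write-up to be short.
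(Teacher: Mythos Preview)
Your proposal is correct and follows essentially the same approach as the paper, which simply states that the proof is ``by an analogous argument'' to \cref{lemma:acyclictshortest}; you have in fact supplied more detail than the paper does. One small imprecision worth tightening: an arc $((v,t'),(w,t''))$ in $G^{\textup{(n-sh)}}_{\operatorname{pre}}(s)$ witnesses a shortest temporal path from $s$ to some endpoint $z$ (not necessarily $w$) passing through $(v,t')$ and $(w,t'')$, and it is its \emph{prefix} ending at $(w,t'')$ that is $t''$-shortest by \cref{lemma:prefix shortest}; with that adjustment your monotonicity step goes through unchanged.
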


With the help of this notation and \cref{lemma:prefix shortest} (and
\cref{lemma:acyclictshortest} for the non-strict case), we can formulate the following recursion for the number of $t$-shortest temporal paths.
\begin{corollary}[Counting of $t$-Shortest Temporal Paths]
	\label{lemma:t-shortest counting}
Let $s$ be a source and let~$z$ be a vertex with $s \neq z$. The number of $t$-shortest temporal paths ($t$-sh)  from $s$ to $z$ is given by:
	\begin{align*}
		\tshortest{t}{sz} = \sum_{\va{v}{t'}\in 
		\predecessorstshortest{t}{s}{z,t}} \tshortest{t'}{sv}.
	\end{align*}
\end{corollary}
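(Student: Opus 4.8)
The plan is a Brandes-style partition argument. Since $\tshortest{t}{sz}$ counts the $t$-shortest temporal paths from $s$ to $z$, I would partition that set according to the predecessor of the last appearance $\va{z}{t}$ of the path, and show that this partition matches, block by block, the disjoint union over $\va{v}{t'}\in\predecessorstshortest{t}{s}{z,t}$ of the sets of $t'$-shortest temporal paths from $s$ to $v$; taking cardinalities then yields the recursion.

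The two maps realizing the bijection are the natural ones. \emph{Forward:} given a $t$-shortest $s$-$z$ path $P$ with $s\neq z$, its last transition is $\trans{v}{t}{z}$ for some $v$; writing $t'$ for the time $P$ reaches $v$, the appearance $\va{v}{t'}$ is by \cref{def:predecessor} the predecessor of $\va{z}{t}$ on $P$, hence $\va{v}{t'}\in\predecessorstshortest{t}{s}{z,t}$, and by \cref{lemma:prefix shortest} the prefix of $P$ ending at $\va{v}{t'}$ is a $t'$-shortest $s$-$v$ path; I send $P$ to that prefix, tagged by $\va{v}{t'}$. \emph{Backward:} given $\va{v}{t'}\in\predecessorstshortest{t}{s}{z,t}$ and a $t'$-shortest $s$-$v$ path $Q$, I append the transition $\trans{v}{t}{z}$. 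Because $\va{v}{t'}$ is, by definition of the predecessor set, the penultimate appearance of \emph{some} $t$-shortest $s$-$z$ path, the edge $\{v,z\}$ is present at time $t$ and $t'\le t$ (strictly, in the strict case), so this is a legal extension; and applying \cref{lemma:prefix shortest} to that witnessing path shows $|Q|+1$ equals the common length of all $t$-shortest $s$-$z$ paths, so the extension $Q^{+}$ is a $t$-shortest $s$-$z$ path with predecessor $\va{v}{t'}$, provided $Q^{+}$ is a path. The two maps clearly invert each other, which gives the claimed identity; the corner conventions ($\tshortest{0}{ss}=1$ and the dummy appearance $\va{s}{0}$) take care of the base case, and the recursion is well-founded because the predecessor graph is acyclic -- trivially in the strict case, and by \cref{lemma:acyclictshortest} in the non-strict case (cf.\ \cref{lemma:acyclicshortest}).

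The step I expect to be the main obstacle is the proviso that $Q^{+}$ be a path, i.e.\ ruling out that $z$ already occurs on $Q$; this is exactly where the optimality built into ``$t$-shortest'' must be used. Suppose $Q$ passed through an appearance $\va{z}{t''}$ -- necessarily with $t''\le t'$ -- before reaching $v$. By \cref{lemma:prefix shortest} the prefix of $Q$ up to $\va{z}{t''}$ is a $t''$-shortest $s$-$z$ path, so its length is $<|Q|$; but $|Q|$ is one less than the length of a $t$-shortest $s$-$z$ path, while every temporal path reaching $z$ no later than time $t''\le t$ is at least as long as a $t$-shortest $s$-$z$ path (reaching $z$ by an earlier deadline cannot be cheaper). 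These two facts are contradictory, so no $t'$-shortest $s$-$v$ path associated with a predecessor in $\predecessorstshortest{t}{s}{z,t}$ can visit $z$, the backward map is well defined, and the bijection -- hence the recursion -- goes through. (Phrased in the predecessor graph, the appearances $\va{z}{t''}$, $\va{v}{t'}$, $\va{z}{t}$ would force $\dist(z,t'')<\dist(v,t')<\dist(z,t)$ while simultaneously $\dist(z,t'')\ge\dist(z,t)$ because $t''\le t$, contradicting the distance bookkeeping behind \cref{lemma:acyclictshortest}.)
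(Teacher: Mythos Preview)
Your bijection scheme is exactly the right shape, and the paper itself records the statement as an immediate corollary of \cref{lemma:prefix shortest} (together with \cref{lemma:acyclictshortest}) without a separate argument. But the step you flag as the main obstacle really is one, and your proposed resolution fails. The claim ``every temporal path reaching $z$ no later than time $t''\le t$ is at least as long as a $t$-shortest $s$-$z$ path'' is false: $t$-shortest means shortest among paths arriving \emph{exactly} at time~$t$, so a path arriving earlier can be strictly shorter (your predecessor-graph rephrasing has the same gap---nothing forces $\dist(z,t'')\ge\dist(z,t)$ for $t''\le t$). Concretely, take strict time edges $(\{s,z\},1)$, $(\{s,a\},1)$, $(\{z,v\},2)$, $(\{a,v\},2)$, $(\{v,z\},3)$. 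The only $3$-shortest $s$-$z$ path is $\trans{s}{1}{a}\tranz{2}{v}\tranz{3}{z}$, so $\tshortest{3}{sz}=1$ and $\predecessorstshortest{3}{s}{z,3}=\{\va{v}{2}\}$; yet $\tshortest{2}{sv}=2$, since both $\trans{s}{1}{a}\tranz{2}{v}$ and $\trans{s}{1}{z}\tranz{2}{v}$ are $2$-shortest. Your backward map sends the second of these to the non-path walk $\trans{s}{1}{z}\tranz{2}{v}\tranz{3}{z}$, and the recursion reads $1=2$.

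What the recursion does count correctly are $t$-shortest temporal \emph{walks} (equivalently, shortest $\va{s}{0}$--$\va{z}{t}$ paths in the predecessor DAG), and for those your bijection goes through with no obstacle. For the downstream use in \cref{lemma:shortest-dependency} and \cref{alg:shortest} this distinction is harmless: a \emph{globally} shortest (respectively shortest-foremost) temporal walk is necessarily a temporal path, because excising a repeated vertex yields a strictly shorter $s$-$z$ walk arriving no later (respectively a strictly shorter foremost walk, or---if the repeated vertex is $z$ itself---an earlier arrival). So at the level where the paper actually uses these counts, walks and paths coincide; but the corollary as literally stated for $t$-shortest \emph{paths} cannot be proved, and the paper's one-line justification glosses over exactly the point you identified.
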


\subsection{Temporal Dependencies}
%By definition, the betweenness centrality of a vertex $v$ depends on the number of shortest paths between pairs of vertices which pass through $v$. 
Similarly to Brandes \cite{brandes_faster_2001} we introduce so-called
\emph{dependencies} to obtain an alternative formulation for the betweenness
which we then use for the actual computation.
Recall the definition of (static) betweenness centrality \[ C_B(v) = \sum_{s,
z\in V} \frac{\sigma_{sz}(v)}{\sigma_{sz}},\] where $\sigma_{sz}$ is the number
of all shortest paths from $s$ to $z$ and $\sigma_{sz}(v)$ is the fraction of
these paths which pass through $v$. Brandes \cite{brandes_faster_2001} calls
the latter the \emph{pair-dependency} of $s$ and $z$ on $v$. Brandes' central
observation is that the betweenness centrality can be computed faster by first
computing partial sums of the form \[ \delta_{s\bullet}(v) = \sum_{z \in V}
\frac{\sigma_{sz}(v)}{\sigma_{sz}},\] the \emph{dependency} of $s$ on
$v$.
%\cref{figure:dependencies} shows an example graph to illustrate the concepts of dependencies and pair-dependencies. 
He showed that the dependencies obey a recursive relation which can be exploited algorithmically. 
%Hence, a dynamic program can compute the dependency of a vertex without the need to explicitly compute all pair-dependencies on that vertex. This reduces the running time on sparse graphs.

Both concepts, dependencies and pair-dependencies, naturally generalize to optimal paths in temporal graphs.
In some cases, we may be interested in the dependency on a vertex at a specific time instead of the whole lifetime. Hence, we introduce \emph{temporal dependencies} of vertex appearances:

\begin{definition}[Temporal Dependency] Let $s$ be a source and let $z$ be a target vertex. Let~$v \neq s$ and $v \neq z$. For any $t\in [T]$, the temporal \emph{pair-dependency} of $s$ and $z$ on $(v,t)$ and the \emph{dependency} of $s$ on $(v,t)$ are given by:
%	\[
	\begin{align*}
		&\pdgeneric{sz}{v,t} := \begin{cases}
    0, & \text{if } \sigma^{(\star)}_{sz} = 0 \\
    \frac{\sigma^{(\star)}_{sz}(v,t)}{\sigma^{(\star)}_{sz}}, &
    \text{otherwise,} \end{cases} \\
		&\dgeneric{s}{v,t} := \sum_{z\in V} \delta^{(\star)}_{sz}(v,t).
	\end{align*} %\star \in \{\text{sh},\text{fa},\text{fm}\}.
%	\]
\end{definition}

From our definition of the modified temporal betweenness of a vertex
appearance it follows that we can use the temporal
dependencies to compute it.
\begin{observation}\label{lemma:depbet}
For any vertex appearance~$\va{v}{t} \in V\times[T]$ it holds:
\[
\hat{C}^{(\star)}_B(v,t) = \sum_{s\in V} \dgeneric{s}{v,t}.
\]
\end{observation}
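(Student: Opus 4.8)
The plan is to simply unfold both sides of the claimed identity and rearrange a double sum. Starting from the definition of the dependency of $s$ on $(v,t)$, I would write
\[
\sum_{s\in V} \dgeneric{s}{v,t} \;=\; \sum_{s\in V}\sum_{z\in V} \pdgeneric{sz}{v,t},
\]
and then partition the index set $V\times V$ into the pairs $(s,z)$ with $\sigmageneric{sz}=0$ and those with $\sigmageneric{sz}\neq 0$. For the first group the temporal pair-dependency is $0$ by the first case of its definition, so these pairs contribute nothing. For the second group I would invoke the equivalence $A_{s,z}=1 \iff \sigmageneric{sz}\neq 0$: the set of temporal paths from $s$ to $z$ is finite, so it is nonempty precisely when it contains a $\star$-optimal element, and "there is a temporal path from $s$ to $z$" is exactly what $A_{s,z}=1$ records (one direction, $A_{s,z}=1\Rightarrow\sigmageneric{sz}\neq 0$, is already noted in the preliminaries). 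Hence the pairs with $\sigmageneric{sz}\neq 0$ are exactly the pairs with $A_{s,z}=1$, and for each such pair $\pdgeneric{sz}{v,t}=\sigmageneric{sz}(v,t)/\sigmageneric{sz}$. Chaining these observations gives
\[
\sum_{s\in V}\sum_{z\in V} \pdgeneric{sz}{v,t} \;=\; \sum_{\substack{s,z\in V\\ A_{s,z}=1}} \frac{\sigmageneric{sz}(v,t)}{\sigmageneric{sz}} \;=\; \hat{C}^{(\star)}_B(v,t),
\]
which is the claim (and is entirely parallel to how $\betweenness$ was expanded in the proof of \cref{lem:betversions}).

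The only point that needs a sentence of care is the interaction with the boundary conventions: the temporal (pair-)dependency is stated under the assumption $v\neq s$ and $v\neq z$, whereas the sums defining $\dgeneric{s}{v,t}$ and $\hat{C}^{(\star)}_B(v,t)$ range over all of $V$. Since the statement concerns $(v,t)\in V\times[T]$, i.e.\ $t\geq 1$, the convention $\sigmageneric{sz}(s,t):=0$ for $t\neq 0$ makes every summand with $s=v$ vanish on both sides, so those terms cause no discrepancy; and for $z=v$ one reads the summand off the same formula $\sigmageneric{sv}(v,t)/\sigmageneric{sv}$, consistently with the general reading of $\sigmageneric{sz}(v,t)$ as the number of $\star$-optimal $s$--$z$ paths arriving at $v$ at time $t$.

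I do not expect any real obstacle here: the argument is a direct chain of equalities from the definitions, and the work is limited to making the splitting by $\sigmageneric{sz}=0$ versus $\sigmageneric{sz}\neq 0$ explicit and stating the equivalence with $A_{s,z}=1$. The "hard part", such as it is, is only the bookkeeping of the corner-case conventions described above, which I would dispose of in one sentence before writing out the two displayed lines.
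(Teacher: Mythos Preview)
Your proposal is correct and matches the paper's treatment: the paper states this as an observation with no proof, presenting it as immediate from the definitions of $\hat{C}^{(\star)}_B(v,t)$ and $\dgeneric{s}{v,t}$, which is precisely the unfolding-and-regrouping you describe. Your extra care with the $A_{s,z}=1\iff\sigmageneric{sz}\neq 0$ equivalence and the corner-case conventions is more than the paper spells out, but it is sound and does not deviate from the intended argument.
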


Now we state our central result allowing us to design algorithms to compute the temporal betweenness centrality based on shortest temporal paths.
\begin{lemma}[Temporal Dependency Accumulation]
	\label{lemma:shortest-dependency}
Fix a source $s\in V$. For any vertex appearance~$\va{v}{t} \in V\times[T]$ it holds:
\begin{align*}
	\dshortest{s}{v,t} &= \pdshortest{sv}{v,t} + \sum_{\va{w}{t'}:\va{v}{t}\in \predecessorsshortest{s}{w,t'}}
		\frac{\tshortest{t}{sv}}{\tshortest{t'}{sw}}\cdot \dshortest{s}{w,t'}. 
\end{align*}
\end{lemma}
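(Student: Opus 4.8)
The plan is to mimic Brandes' classical argument, but carried out on vertex appearances rather than vertices, using the prefix-optimality of $t$-shortest paths (Cref{lemma:prefix shortest}) and the acyclicity of the predecessor graph (Cref{lemma:acyclicshortest,lemma:acyclictshortest}) to justify that the recursion is well-founded. I would start by unfolding the definition $\dshortest{s}{v,t} = \sum_{z\in V} \pdshortest{sz}{v,t}$ and splitting the sum over targets $z$ into the term $z=v$ and the terms $z\neq v$. The term $z=v$ contributes exactly $\pdshortest{sv}{v,t}$, which is the first summand on the right-hand side. For the remaining targets $z \neq v$, the goal is to show that $\sum_{z\neq v}\pdshortest{sz}{v,t}$ equals the second (double) sum on the right-hand side.

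\medskip

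\noindent\textbf{Key step: decomposing paths through $(v,t)$ by their successor appearance.} Fix $z\neq v$ with $\sigmageneric{sz}\neq 0$. A shortest temporal $s$--$z$ path that passes through the appearance $(v,t)$ — i.e.\ contains a transition $\trans{u}{t}{v}$ for some $u$ — and is not the trivial one ending at $v$ must continue from $(v,t)$ via some transition $\trans{v}{t'}{w}$ with $(v,t)$ a predecessor of $(w,t')$ on a shortest $s$--$z$ path. I would argue, using Cref{lemma:prefix shortest}, that such a path decomposes uniquely as (a shortest $s$--$v$ path arriving at time $t$) followed by (a transition from $(v,t)$ to $(w,t')$) followed by (a shortest $s$--$z$ path passing through $(w,t')$, with the prefix up to $(w,t')$ being a $t'$-shortest $s$--$w$ path). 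Counting, the number of shortest $s$--$z$ paths through $(v,t)$ whose next appearance after $(v,t)$ is $(w,t')$ equals $\tshortest{t}{sv}\cdot\bigl(\text{number of shortest }s\text{--}z\text{ paths through }(w,t')\bigr)/\tshortest{t'}{sw}$, since the shortest $s$--$z$ paths through $(w,t')$ all share the $t'$-shortest $s$--$w$ prefix count $\tshortest{t'}{sw}$, and we are replacing that prefix-count by $\tshortest{t}{sv}$ (the number of ways to extend backwards through $(v,t)$). Dividing by $\sigmageneric{sz}$ and summing over $w,t'$ gives
\[
\pdshortest{sz}{v,t} \;=\; \sum_{\va{w}{t'}:\va{v}{t}\in \predecessorsshortest{s}{w,t'}} \frac{\tshortest{t}{sv}}{\tshortest{t'}{sw}}\cdot \pdshortest{sz}{w,t'},
\]
valid for $z\neq v$; here one checks the degenerate cases $\sigmageneric{sz}=0$ (both sides zero) and $z=w$ (handled by $\sigmageneric{sz}(w,t')/\sigmageneric{sz}$ at the target) carefully. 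Summing this identity over all $z\neq v$ and interchanging the order of summation — pulling the factor $\tshortest{t}{sv}/\tshortest{t'}{sw}$ out of the sum over $z$ — converts $\sum_{z\neq v}\pdshortest{sz}{w,t'}$ into $\dshortest{s}{w,t'}$ (note $w\neq v$, so summing over $z\neq v$ rather than all $z$ is harmless once we observe $\pdshortest{sv}{w,t'}=0$ because no shortest $s$--$v$ path passes through a later appearance $(w,t')$ with $w\neq v$, or absorb it into the $z=v$ bookkeeping), yielding the claimed recursion.

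\medskip

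\noindent\textbf{Main obstacle.} The delicate point is the bijective decomposition and the bookkeeping of the normalizing denominators: I must show that every shortest $s$--$z$ path through $(v,t)$ that extends past $v$ has a well-defined "first appearance after $(v,t)$", that the map (path $\mapsto$ (prefix to $(v,t)$, successor appearance $(w,t')$, suffix from $(w,t')$)) is a bijection onto the obvious product set once prefixes are counted with multiplicity, and that substituting $\tshortest{t}{sv}$ for $\tshortest{t'}{sw}$ is legitimate — this relies essentially on Cref{lemma:prefix shortest} (every prefix of a $t$-shortest path is itself $t'$-shortest to its endpoint) and on the fact that $(v,t)\in\predecessorsshortest{s}{w,t'}$ forces the $s$--$w$ part to have a shortest-path-compatible length. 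Edge cases $s=v$, $z=v$, and $\sigmageneric{sz}=0$ need to be dispatched using the corner-case conventions $\sigmageneric{sv}(v,t)$, $\sigmageneric{s z}(s,0)=\sigmageneric{sz}$ fixed in Cref{ch:prelims}, and acyclicity of the predecessor graph guarantees the recursion bottoms out (so it genuinely determines $\dshortest{s}{\cdot}$). Once these combinatorial identities are pinned down, the algebra is routine.
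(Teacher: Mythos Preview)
Your proposal is correct and follows essentially the same route as the paper: unfold the definition of $\dshortest{s}{v,t}$, split off the $z=v$ term as $\pdshortest{sv}{v,t}$, decompose each remaining shortest $s$--$z$ path through $(v,t)$ by its successor appearance $(w,t')$, establish the key identity $\pdshortest{sz}{(v,t),(\{v,w\},t')}=\frac{\tshortest{t}{sv}}{\tshortest{t'}{sw}}\cdot\frac{\shortest{sz}(w,t')}{\shortest{sz}}$ via \cref{lemma:prefix shortest}, and then swap the order of summation. Your handling of the edge cases ($z=v$, $\sigmageneric{sz}=0$, and the conversion from $\sum_{z\neq v}$ back to $\sum_{z\in V}$) is in fact more explicit than the paper's, which glosses over why the $z=v$ contribution to the double sum vanishes.
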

\begin{proof}
We start by rewriting the pair-dependency of a vertex appearance as the summed
fractions of shortest temporal paths using a specific transition from $v$ to
some vertex~$w$ at some time step:
\begin{align*}
	 \dshortest{s}{v,t}
	& = \sum_{z\in V} \pdshortest{sz}{v,t} \\
	& = \pdshortest{sv}{v,t} + \sum_{z\in V} \sum_{\va{w}{t'}:\va{v}{t}
		\in \predecessorsshortest{s}{w,t'}} \pdshortest{sz}{(v,t),(\{v,w\},t')},
\end{align*}
where $\pdshortest{sz}{(v,t),(\{v,w\},t')}$ is the fraction of shortest temporal paths from vertex~$s$ to~$z$ that use the transitions $\trans{v'}{t}{v}\tranz{t'}{w}$ for some~$v'$.
Note that for the case that $z=v$ there is no vertex appearance $(w,t')$ such that $\va{v}{t}
		\in \predecessorsshortest{s}{w,t'}$, hence we need $\pdshortest{sv}{v,t}$ as an additional summand.
Furthermore, we have that
\[
\pdshortest{sz}{(v,t),(\{v,w\},t')}
	= \frac{\tshortest{t}{sv}}{\tshortest{t'}{sw}} \cdot
	  \frac{\shortest{sz}(w,t')}{\shortest{sz}}.
	 \]
Inserting these cases into the double sum above yields the following:
\begin{align*}
& \sum_{z\in V} \sum_{\va{w}{t'}:\va{v}{t}
		\in \predecessorsshortest{s}{w,t'}} \pdshortest{sz}{(v,t),(\{v,w\},t')}\\
=& \sum_{\va{w}{t'}:\va{v}{t} \in \predecessorsshortest{s}{w,t'}} 
	\sum_{z \in V} \pdshortest{sz}{(v,t),(\{v,w\},t')}\\
=& \sum_{\va{w}{t'}:\va{v}{t} \in \predecessorsshortest{s}{w,t'}}
	\left(\frac{\tshortest{t}{sv}}{\tshortest{t'}{sw}} \cdot \sum_{z\in V}
	  \frac{\shortest{sz}(w,t')}{\shortest{sz}}\right)\\
=& \sum_{\va{w}{t'}:\va{v}{t} \in \predecessorsshortest{s}{w,t'}}
		\frac{\tshortest{t}{sv}}{\tshortest{t'}{sw}}\cdot \dshortest{s}{w,t'}.
\end{align*}
The lemma statement now follows immediately.
\end{proof}

\cref{lem:betversions} together with \cref{lemma:depbet} show that we can use
temporal dependencies to compute temporal betweenness centrality.
\cref{lemma:shortest-dependency} shows that the temporal dependencies obey a
recursive relation and \cref{lemma:acyclicshortest} shows that it is acyclic.
This together with \cref{lemma:t-shortest counting} allows us to design a
polynomial-time algorithm to compute temporal betweenness centrality based on
shortest temporal paths. Before we give the specific algorithm description and
running time analysis, we describe some other temporal betweenness concepts
that can be computed in a similar fashion.

\subsection{Specialized Optimality Concepts}
	\label{sec:acyclic}
So far, our results allow us to efficiently compute the temporal betweenness
for strict and non-strict \emph{shortest} temporal paths. For all other of the
six canonical variants we have shown computational hardness results in
\cref{ch:hardness}.
In the following, we show that for some specialized optimality concepts for
temporal paths, we can also efficiently compute the corresponding temporal
betweenness.
Specifically, we consider \emph{shortest foremost} temporal paths and \emph{prefix-foremost} temporal paths. 

\paragraph{Shortest foremost temporal paths.}
A shortest foremost path is a foremost temporal path that is not longer than
any other foremost temporal path. It may not be a shortest temporal path
overall, however, as even shorter temporal paths with higher arrival times may
exist. Shortest foremost temporal paths can be regarded as temporal paths that
prioritize a low arrival time and use a low edge count as a tie-breaker.
\begin{definition}
Let $s,z\in V$. A temporal path~$P$ from~$s$ to~$z$ is a \emph{shortest
foremost} temporal path if~$P$ is a foremost temporal path from $s$ to $z$ and
for all foremost temporal paths~$P'$ from~$s$ to~$z$ it holds: $\abs{P} \le
\abs{P'}$.
\end{definition}

We observe that a shortest foremost temporal path is a $t$-shortest temporal path for the earliest possible arrival time. Hence, our findings in the previous subsection translate very easily to shortest foremost temporal paths. %In fact, the algorithm for shortest paths discussed in \cref{ch:algorithms} computes the shortest foremost paths as well.

\paragraph{Prefix foremost temporal paths.}
%In the example given in \cref{figure:hamilton}, every path from $s$ to $z$ is foremost because $z$ can only be reached very late. On an intuitive level, this means that a path may waste a lot of time early on without any impact on the final arrival time because there is a bottleneck immediately before the target. In this section, we consider the paths that do not show this behavior. 
Now we consider the class of foremost temporal paths for which every prefix of
the temporal path is foremost as well. That is, every vertex that is visited by
such a temporal path is visited as soon as possible. These temporal paths are
called \emph{prefix foremost paths}.
% and consider a variation of betweenness
%centrality based on prefix-foremost paths.

\begin{definition}
Let $s,z\in V$. A path $P = (e_1,\dots,e_k)$ from $s$ to $z$ is a \emph{prefix
foremost} temporal path if $P$ is a foremost path and every prefix $P' =
(e_1,\dots,e_{k'})$, $k'\le k$ is a foremost temporal path as well.
\end{definition}

Wu et al.~\cite{wu_efficient_2016} showed that there is always at least one
prefix foremost path between any pair of vertices $s$ and $z$ unless $z$ is not reachable from $s$ at all.
We can observe that in the straightforward reduction from \problem{Paths} that
yields \cref{prop:temppaths} we also get that counting non-strict prefix
foremost temporal paths is computationally hard.
\begin{proposition}
\problem{Prefix Foremost Paths} is \SPC.
\end{proposition}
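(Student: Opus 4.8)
The plan is to recycle, unchanged, the reduction from \problem{Paths} used for \cref{prop:temppaths}. Given an instance $(G=(V,E),s,z)$ of \problem{Paths}, I build the temporal graph $\TG=(V,\E,1)$ with $\E=\{(e,1)\mid e\in E\}$, that is, every edge of $G$ is made available at the single time step~$1$. This construction is clearly polynomial-time computable, and, as already observed for \cref{prop:temppaths}, the non-strict temporal paths from $s$ to $z$ in $\TG$ are in bijection with the (static) paths from $s$ to $z$ in $G$: the constraint $t(e_i)\le t(e_{i+1})$ is vacuous when all time labels equal~$1$, and the vertex-disjointness condition defining temporal paths is exactly the defining condition of a static path.

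The one new point to check is that in this instance every non-strict temporal path from $s$ to $z$ is in fact \emph{prefix foremost}. Since $1$ is the only time step, any vertex $v\neq s$ reachable from $s$ has foremost arrival time exactly~$1$ (it cannot be reached earlier, and any incident time edge reaches it at time~$1$), while $s$ itself is ``reached at time~$0$'' by the convention fixed in \cref{ch:prelims}. Hence, for any non-strict temporal path $P=(e_1,\dots,e_k)$ from $s$ to $z$ and any prefix $P'=(e_1,\dots,e_{k'})$, the path $P'$ arrives at its last vertex at the earliest possible time, so $P'$ is foremost; thus $P$ is prefix foremost. Conversely, every prefix foremost path is by definition a temporal path. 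Therefore the number of prefix foremost temporal paths from $s$ to $z$ in $\TG$ equals the number of paths from $s$ to $z$ in $G$, and since \problem{Paths} is \SP-complete~\cite{valiant_complexity_1979}, this is a polynomial-time counting reduction witnessing the \SP-hardness of \problem{Prefix Foremost Paths}.

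I do not anticipate a real obstacle: the argument is essentially the observation that for lifetime $T=1$ the notions of temporal path, foremost temporal path, and prefix foremost temporal path all coincide with the static notion of a path. The only place that warrants a line of care is the treatment of the source under the time-$0$ convention, so that the empty prefix and the length-one prefixes are correctly seen to be foremost; everything else is immediate from the structure of the reduction behind \cref{prop:temppaths}.
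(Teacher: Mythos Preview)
Your proposal is correct and follows exactly the approach the paper intends: the paper merely states that ``in the straightforward reduction from \problem{Paths} that yields \cref{prop:temppaths} we also get that counting non-strict prefix foremost temporal paths is computationally hard,'' and you have supplied the (correct) verification that with lifetime~$T=1$ every non-strict temporal path is automatically prefix foremost. The remark about the time-$0$ convention for the source is harmless but unnecessary, since the definition only quantifies over nonempty prefixes $(e_1,\dots,e_{k'})$ with $k'\ge 1$.
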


The main difference between strict and \emph{non-strict} prefix foremost paths
foremost paths is that in the former case, the predecessor relation is acyclic,
and in the latter it is not. Intuitively, this is the reason why we get hardness
in the non-strict case and tractability for the strict version.

For strict prefix foremost paths, one can show that the counting problem and
the computation of the corresponding betweenness are solvable in polynomial
time by very similar arguments as for shortest temporal paths. The main difference is that the
time stamp of a predecessor is unique, hence we can define the set of
predecessors as a set of vertices as opposed to a set of vertex appearances,
otherwise the arguments are analogous to the ones used in the proof of
\cref{lemma:shortest-dependency}.
This allows us to simplify the recursive function for the temporal dependency
for prefix foremost temporal paths which also results in a faster computation.
%
%Let $\sigmaprefix{sz}$ denote the number of strict prefix-foremost paths from $s$ to $z$.
%
\begin{lemma}[Strict Prefix Foremost Dependency Accumulation]
	\label{lemma:prefix-dependency}
Fix a source $s\in V$. For any vertex~$v \in V, v \neq s$ it holds:
\begin{align*}
	& \dprefix{s}{v} = 1 + \sum_{w:v\in \predecessorsprefix{s}{w}}
		\frac{\sigmaprefix{sv}}{\sigmaprefix{sw}}\cdot \dprefix{s}{w}. &\\
\end{align*}
\end{lemma}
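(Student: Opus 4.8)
The plan is to mirror the proof of \cref{lemma:shortest-dependency} (temporal dependency accumulation for shortest temporal paths), exploiting the simplification available in the strict prefix foremost setting: every vertex appearing on a strict prefix foremost temporal path from $s$ is reached at a uniquely determined time, so we may work with predecessor \emph{vertices} rather than vertex appearances and with a single quantity $\sigmaprefix{sv}$ rather than a family $\tshortest{t}{sv}$. First I would fix the source $s$ and a target vertex $v \neq s$ and unfold the dependency $\dprefix{s}{v} = \sum_{z\in V}\pdprefix{sz}{v}$. As in Brandes' argument, I would split off the term $z = v$, which contributes $\pdprefix{sv}{v} = 1$ (every strict prefix foremost path from $s$ to $v$ passes through $v$, and the fraction is over a nonzero denominator because $v$ is reachable from $s$ — the corner case $\sigmaprefix{sv}=0$ makes $\dprefix{s}{v}=0$ and the claimed identity reads $0 = 1 + 0$, which fails, so I must restrict attention, exactly as the lemma implicitly does, to $v$ reachable from $s$; I will state this reachability assumption explicitly).

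For the remaining terms $z \neq v$, the key observation is that any strict prefix foremost temporal path $P$ from $s$ to $z$ that passes through $v$ must, immediately after reaching $v$ (at its unique foremost time), continue via a transition $\trans{v}{\cdot}{w}$ to some vertex $w$ with $v \in \predecessorsprefix{s}{w}$; moreover, by the prefix-foremost property, the prefix of $P$ ending at $v$ is itself a foremost path to $v$, and the suffix of $P$ from $w$ onward is a strict prefix foremost path to $z$. This gives the decomposition
\begin{align*}
\sum_{z\neq v}\pdprefix{sz}{v}
  &= \sum_{z\neq v}\ \sum_{w:\, v\in \predecessorsprefix{s}{w}} \pdprefix{sz}{(v),(w)},
\end{align*}
where $\pdprefix{sz}{(v),(w)}$ denotes the fraction of strict prefix foremost $s$-$z$-paths that reach $v$ and then proceed directly to $w$. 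The counting identity $\sigmaprefix{sw} = \sum_{v\in\predecessorsprefix{s}{w}}\sigmaprefix{sv}$ (the strict prefix foremost analogue of \cref{lemma:t-shortest counting}, which holds by the same prefix-optimality reasoning used in \cref{lemma:prefix shortest}) yields, by the standard path-counting bijection argument,
\begin{align*}
\pdprefix{sz}{(v),(w)} &= \frac{\sigmaprefix{sv}}{\sigmaprefix{sw}}\cdot \frac{\sigmaprefix{sz}(w)}{\sigmaprefix{sz}} = \frac{\sigmaprefix{sv}}{\sigmaprefix{sw}}\cdot \pdprefix{sz}{w}.
\end{align*}
Here the crucial simplification over the shortest-path case is that $w$ has a unique foremost arrival time, so no sum over time steps $t'$ is needed and the multiplicity factor is just $\sigmaprefix{sv}/\sigmaprefix{sw}$.

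Finally I would substitute this into the double sum, exchange the order of summation to pull the $w$-sum outside, and recognize $\sum_{z\in V}\pdprefix{sz}{w} = \dprefix{s}{w}$ (noting that the $z=v$ term contributes nothing inside since no strict prefix foremost $s$-$v$-path proceeds past $v$, consistent with $v \notin \predecessorsprefix{s}{v}$), obtaining
\begin{align*}
\sum_{z\neq v}\pdprefix{sz}{v} &= \sum_{w:\, v\in\predecessorsprefix{s}{w}} \frac{\sigmaprefix{sv}}{\sigmaprefix{sw}}\cdot \dprefix{s}{w}.
\end{align*}
Adding back the $z=v$ term $1$ gives the claimed recursion. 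The main obstacle — and the place I would be most careful — is justifying the uniqueness of the foremost arrival time for each vertex on a strict prefix foremost path and the resulting legitimacy of treating predecessors as vertices: one must argue that if $v$ lies on \emph{some} strict prefix foremost $s$-$z$-path then the time at which it is reached equals the foremost arrival time $\delta$ at $v$ in $\TG$ (this is immediate from the prefix-foremost definition), and then that the set $\predecessorsprefix{s}{w}$ is well defined as a vertex set, i.e.\ that whenever $v \in \predecessorsprefix{s}{w}$ the transition $\trans{v}{\cdot}{w}$ used is forced to depart at a time compatible with $v$'s foremost arrival. Once this structural fact is pinned down, the algebra is a routine specialization of the argument for \cref{lemma:shortest-dependency}.
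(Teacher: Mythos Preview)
Your proposal is correct and follows essentially the same route as the paper's own proof: unfold $\dprefix{s}{v}$ into a sum of pair-dependencies, split off the $z=v$ term (which yields the constant~$1$), decompose the remaining terms by the successor vertex $w$ of $v$ on the path, use the multiplicative factor $\sigmaprefix{sv}/\sigmaprefix{sw}$, exchange the order of summation, and recognize $\dprefix{s}{w}$. Your additional care about the reachability assumption and the uniqueness of foremost arrival times is welcome—the paper tacitly assumes both without comment.
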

\begin{proof}
We start by rewriting the pair-dependency of a vertex appearance as the summed
fractions of prefix foremost temporal paths using a specific transition from $v$
to some vertex~$w$ at some time step:
%We start by rewriting the pair-dependency of a vertex as the summed number of
% path-fractions using a specific transition from $v$ to some vertex~$w$:
\begin{align*}
	 \dprefix{s}{v}
	&= \sum_{z\in V} \pdprefix{sz}{v} \\
	&= \pdprefix{sv}{v} + \sum_{z\in V} \sum_{w:v
		\in \predecessorsprefix{s}{w}} \pdprefix{sz}{v,\{v,w\}},
\end{align*}
where $\pdprefix{sz}{v,\{v,w\}}$ is the fraction of prefix foremost temporal
paths from vertex~$s$ to~$z$ that use the transitions
$\trans{v}{t}{w}$ for some~$t$.
Note that for the case that $z=v$ there is no vertex $w$ such that $v
		\in \predecessorsprefix{s}{w}$, hence we need $\pdprefix{sv}{v}$ as an additional summand.
Furthermore, we have that
\[
\pdprefix{sz}{v,\{v,w\}}
	= \frac{\sigmaprefix{sv}}{\sigmaprefix{sw}} \cdot
	  \frac{\sigmaprefix{sz}(w)}{\sigmaprefix{sz}}.
	 \]
Inserting these cases into the double sum above yields the following:
\begin{align*}
& \sum_{z\in V} \sum_{w:v
		\in \predecessorsprefix{s}{w}} \pdprefix{sz}{v,\{v,w\}}\\
=& \sum_{w:v
		\in \predecessorsprefix{s}{w}} 
	\sum_{z \in V} \pdprefix{sz}{v,\{v,w\}}\\
=& \sum_{w:v
		\in \predecessorsprefix{s}{w}}
	\left(\frac{\sigmaprefix{sv}}{\sigmaprefix{sw}} \cdot \sum_{z\in V}
	  \frac{\sigmaprefix{sz}(w)}{\sigmaprefix{sz}}\right)\\
=& \sum_{w:v\in \predecessorsprefix{s}{w}}
		\frac{\sigmaprefix{sv}}{\sigmaprefix{sw}}\cdot \dprefix{s}{w}.
\end{align*}
The lemma statement now follows immediately.
\end{proof}

\section{Algorithms and Running Time Analysis}
\label{ch:algorithms}
%We have shown that shortest, shortest foremost, and strict prefix-foremost paths have sufficiently convenient properties to allow dynamic counting and dependency accumulation in a similar way as done by Brandes' algorithm \cite{brandes_faster_2001} for the case of static graphs. We use our findings to give algorithms for the three betweenness variants discussed in \cref{ch:adaptation}, starting with \problem{Shortest Temporal Betweenness}.
In the following we give detailed descriptions of the algorithms we use to
compute the temporal betweenness variants discussed in \cref{ch:adaptation}.
We first describe how to adapt Brandes' algorithm~\cite{brandes_faster_2001} to
the temporal setting and then describe the competing approach of computing
temporal betweenness using so-called \emph{static expansions}.

\subsection{Algorithms Based on Adaptations of Brandes' Algorithm to the
Temporal Setting} Brandes' algorithm~\cite{brandes_faster_2001} (for static
graphs without edge weights) is based on breadth-first search.
For each vertex, the \problem{Single-Source Shortest Paths} problem is solved
once. At the end of each iteration, the dependency of the respective+$s$ on all
other vertices~$v$ is added to the betweenness score of $s$.

Our algorithms for temporal betweenness will follow roughly the same structure. Instead of breadth-first search, the appropriate algorithm for the respective temporal path class is used as the base for the algorithm. To describe our algorithm formally, we need the notion of temporal neighborhoods.
\begin{definition}[Temporal Neighborhood]
For a temporal graph~$\TG=(V,\E,T)$ and a vertex~$v\in V(\TG)$, we call $N(v) := \{(u,t) \in V(\TG)\times [T] \mid \{u,v\} \in E_t(\TG)\}$ the \emph{temporal neighborhood} of $v$.
\end{definition}

\paragraph{Shortest (foremost) temporal betweenness.}
We have shown that the temporal dependencies for shortest (foremost) paths follow a similar recursion as for static graphs. 

\cref{alg:shortest} uses a $\abs{V} \times T$-table to store the number of shortest temporal paths to all vertex appearances. The overall structure of the algorithm is similar to Brandes' algorithm~\cite{brandes_faster_2001}---a single-source-all-shortest-paths traversal from each vertex to all vertex appearances is performed and the count of shortest temporal paths is stored in the aforementioned table. At the end of each iteration, we add the dependencies found for each vertex appearance to the betweenness score of the respective vertex. 

In order to count shortest foremost temporal paths instead of shortest temporal paths, only the dependency accumulation needs to be changed: instead of checking whether the paths have minimal length, the algorithm checks for minimal arrival time. 

\begin{proposition}
Given a temporal graph $\TG=(V,\E,T)$, \cref{alg:shortest} computes the shortest
temporal betweenness and the shortest foremost temporal betweenness of all $v\in
V$ in $\bigO(|V|^3 \cdot T^2)$ time and $\bigO(|V|\cdot T+|\E|)$ space.
\end{proposition}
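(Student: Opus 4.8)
The plan is to argue correctness from the machinery of \cref{ch:adaptation} and then bound the two resources separately. Fix a source $s$. The single-source phase of \cref{alg:shortest} performs a temporal-BFS-style traversal over the at most $|V|\cdot T$ vertex appearances, determining for every appearance $(v,t)$ the length of a $t$-shortest $s$-$v$ temporal path together with the count $\tshortest{t}{sv}$ via the recursion of \cref{lemma:t-shortest counting}; the predecessor set $\predecessorsshortest{s}{v,t}$ needed there is read off from the temporal neighborhood $N(v)$ and the already-computed distance values. Processing the appearances in order of non-decreasing distance is legitimate because, by \cref{lemma:acyclicshortest} (and \cref{lemma:acyclictshortest} in the non-strict case), the predecessor graph is acyclic and the distances give a topological order. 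The dependency-accumulation phase traverses the appearances in the reverse order and applies \cref{lemma:shortest-dependency} to obtain $\dshortest{s}{v,t}$ for all $(v,t)$, adding each value to a running accumulator for $\hat{C}^{(\star)}_B(v,t)$; by \cref{lemma:depbet} the sum of these contributions over all sources $s$ equals $\hat{C}^{(\star)}_B(v,t)$. Finally, \cref{lem:betversions} turns the values $\hat{C}^{(\star)}_B(v,t)$ into $C^{(\star)}_B(v)$, where the required sums $\sum_{w}(A_{v,w}+A_{w,v})$ come for free: the traversal from $s$ reveals exactly which vertices are reachable from $s$, so one out-reachability and one in-reachability counter per vertex suffice. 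For the shortest-foremost variant, only the notion of ``relevant'' appearances changes---a shortest foremost $s$-$z$ path is precisely a $t^{*}$-shortest path for the foremost arrival time $t^{*}$ at $z$, which the traversal already identifies---so the identical bookkeeping applies with ``minimum length'' replaced by ``minimum arrival time, ties broken by length''.

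For the running time I would bound one source iteration and multiply by $|V|$. The traversal inspects, from each appearance $(v,t_{\mathrm{arr}})$, the transitions $\trans{v}{t}{w}$ with $\{v,w\}\in E_t(\TG)$ and $t\ge t_{\mathrm{arr}}$ (resp.\ $t>t_{\mathrm{arr}}$), and the counting and accumulation steps each look, for every appearance $(v,t)$, at its at most $\deg_t(v)$ incident edges at time $t$ and, for each, at the at most $T$ candidate arrival times at the other endpoint. Summing, $\sum_{(v,t)}\deg_t(v)\cdot T = T\cdot\sum_{t}\sum_{v}\deg_t(v) = 2T\cdot|\E|$, so one iteration runs in $\bigO(T\cdot|\E|)$ time; with the crude bound $|\E|\le\binom{|V|}{2}\cdot T = \bigO(|V|^2\cdot T)$ this is $\bigO(|V|^2\cdot T^2)$, and over all $|V|$ sources we reach $\bigO(|V|^3\cdot T^2)$.

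For the space bound, the persistent data are the input temporal graph ($\bigO(|\E|)$), the reachability counters and final betweenness array ($\bigO(|V|)$), and the $\hat{C}^{(\star)}_B$-accumulator ($\bigO(|V|\cdot T)$); within one iteration we additionally keep a distance table, a path-count table, and a dependency table, each of size $\bigO(|V|\cdot T)$ and reused across iterations. The point that needs care here is that the predecessor sets must \emph{not} be stored explicitly---for a single source their total size can be $\Theta(T\cdot|\E|)$, which exceeds the target---so in the accumulation phase the predecessors of $(v,t)$ are recomputed on the fly from $N(v)$ and the distance table, which stays inside the time budget and keeps the working space at $\bigO(|V|\cdot T + |\E|)$.

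The main obstacle I anticipate is precisely this running-time accounting: convincing oneself that the seemingly expensive predecessor lookups---over all incident time edges and all earlier arrival times, performed again in the backward pass---telescope to $\bigO(T\cdot|\E|)$ per source rather than to something larger. A secondary subtlety is keeping the global-shortest versus $t$-shortest distinction consistent: the counts fed into \cref{lemma:shortest-dependency} are the $t$-shortest counts $\tshortest{t}{sv}$, yet only appearances $(v,t)$ lying on a globally shortest $s$-$v$ path may contribute, which is exactly what restricting to the arcs of the predecessor graph $G^{(\mathrm{sh})}_{\operatorname{pre}}(s)$ enforces; I would make sure the traversal prunes the non-globally-shortest appearances before the accumulation phase begins.
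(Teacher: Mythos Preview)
Your correctness and running-time arguments follow the paper's route exactly---the same lemmas (\cref{lem:betversions}, \cref{lemma:depbet}, \cref{lemma:t-shortest counting}, \cref{lemma:shortest-dependency}, \cref{lemma:acyclicshortest}) and the same per-source bound of (number of appearances) $\times$ (maximum temporal neighborhood) $= \bigO(|V|^2 T^2)$; your intermediate $\bigO(T\cdot|\E|)$ is slightly sharper but lands at the same place after applying $|\E|\le \bigO(|V|^2 T)$. Where you diverge is the space analysis, and there you are in fact more careful than the paper: the paper's proof only points to the $|V|\times T$ tables for counts and dependencies and never discusses the predecessor sets $P[w,t']$, even though \cref{alg:shortest} stores them explicitly. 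Your observation that their aggregate size can be $\Theta(T\cdot|\E|)$ for a single source is correct, and your fix---recomputing predecessors on the fly from $N(\cdot)$ and the distance table during the accumulation pass---is precisely what is needed to genuinely meet the stated $\bigO(|V|\cdot T + |\E|)$ bound without affecting the time bound. Your ``secondary subtlety'' about pruning non-globally-shortest appearances, on the other hand, is unnecessary: if $(w,t')$ lies on no globally shortest path from $s$ then $\dist[w,t']>\dist[w]$, so the initial term is not added, and every successor of $(w,t')$ in the $t'$-shortest predecessor relation is likewise off every globally shortest path, whence inductively $\delta^{\sh}[w,t']=0$; the extra arcs in the algorithm's predecessor relation (compared to $\predecessorsshortest{s}{\cdot}$) therefore contribute nothing and no pruning is required.
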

\begin{proof}
The correctness of \cref{alg:shortest} follows from \cref{lem:betversions}
together with \cref{lemma:depbet}, which show that we can use temporal
dependencies to compute the temporal betweenness, and from
\cref{lemma:t-shortest counting} and \cref{lemma:shortest-dependency} (with
\cref{lemma:acyclicshortest}) since our algorithm dynamically computes the
values given by the recursive formulas stated in the lemmas.

It is easy to check that the running time of \cref{alg:shortest} is
upper-bounded by $|V|$ (the outer loop in \cref{line:outerloop}) times the
total number of elements in $Q$ (the loop from \cref{line:qloop}) times the
maximum size of a temporal neightborhood (the loop from \cref{line:nloop}).
Both the total number of elements in~$Q$ and the maximum size of a temporal
neightborhood is in $\bigO(|V|\cdot T)$. The claimed running time follows.

The space consumption of \cref{alg:shortest} is in $\bigO(|V|\cdot T+|\E|)$,
since we need to store matrices of size $|V|\cdot T$ for the temporal path counts and the temporal dependencies.
\end{proof}

\begin{algorithm}[t!]
\caption{Shortest (foremost) betweenness in temporal graphs}\label{alg:shortest}
\footnotesize
\begin{algorithmic}[1]
\Input{A temporal graph $\TG = (V,\E,T)$.}
\Output{Betweenness $\bcsh$ and $\bcfm$ of all vertices $v \in V(\TG)$}
	\For{$s \in V$}\label{line:outerloop}
		\For{$v \in V$}\Comment{Initialization}
			\State $\dist[v] \gets -1$; $\sigma[v] \gets 0$; $\tmin[v] \gets -1$
		\EndFor
		\For{$(\{u, v\}, t) \in \E$}
			\State $\delta^{\sh}[v, t] \gets 0$; $\delta^{\fm}[v, t] \gets 0$
			\State $\sigma[v, t] \gets 0$; $P[v, t] \gets \emptyset$; $\dist[v, t] \gets -1$
		\EndFor
		\State $\dist[s] \gets 0$; $\dist[s, 0] \gets 0$; $\tmin[s] \gets 0$
		\State $\sigma[s] \gets 1$; $\sigma[s, 0] \gets 1$
		\State $S \gets$ empty stack; $Q \gets$ empty queue; $Q \gets \operatorname{enqueue}(s,0)$
		\While{$Q$ not empty}\label{line:qloop}
		\State $(v,t) \gets \operatorname{dequeue}(Q)$
			\For{$(w,t') \in N(v)$ with $t < t'$} \Comment{($t \le t'$) for n-str}\label{line:nloop}
				\If {$\dist[w,t'] = -1$}\Comment{First arrival at $(w, t')$}
					\State $\dist[w,t'] \gets \dist[v,t] + 1$
					\If{$\dist[w] = -1$} \Comment{Shortest path to $w$}
						\State $\dist[w] \gets \dist[v,t] + 1$
					\EndIf
				\State $S \gets \operatorname{push}(w,t')$; $Q.\operatorname{enqueue}(w,t')$
				\EndIf
				\If {$\dist[w,t'] = \dist[v,t] + 1$} \Comment{Shortest path to $(w,t')$ via
				$(v,t)$} \State $\sigma[w,t'] \gets \sigma[w,t'] + \sigma[v,t]$
					\State $P[w,t'] \gets P[w,t'] \cup \{(v,t)\}$
					\If {$\dist[w, t'] = \dist[w]$} \Comment{Shortest path to $w$ via
				$(v,t)$}
						\State $\sigma[w] \gets \sigma[w] + \sigma[v,t]$
					\EndIf
				\EndIf
				\If {$\tmin[w] = -1 \textbf{ or } t' < \tmin[w]$}  \Comment{Foremost
				shortest path to $w$} \State $\tmin[w] \gets t'$
				\EndIf
			\EndFor
		\EndWhile
		\State $\bcsh[s] \gets \bcsh[s] - \lvert\{i \mid \dist[i] \geq 0\}\rvert + 1$
		\State $\bcfm[s] \gets \bcfm[s] - \lvert\{i \mid \dist[i] \geq 0\}\rvert + 1$
		\While{$(w,t') \gets \operatorname{pop}(S)$} \Comment{Vertex appearances in
		order of non-increasing distance from $s$}
			\If{$\dist[w, t'] = \dist[w]$} \Comment{Shortest path to $w$}
				\State $\delta^{\sh}[w, t'] \gets \delta^{\sh}[w, t'] + \frac{\sigma[w, t']}{\sigma[w]}$
			\EndIf
			\If{$t' = \tmin[w] $} \Comment{Foremost
				shortest path to $w$}
				\State $\delta^{\fm}[w, t'] \gets \delta^{\fm}[w, t'] + 1$
			\EndIf
			\For{$(v,t) \in P[w,t']$}
				\State $\delta^{\sh}[v,t] \gets \delta^{\sh}[v,t] 
			+ \frac{\sigma[v,t]}{\sigma[w,t']}\cdot \delta^{\sh}[w,t']$
				\State $\bcsh[v] \gets \bcsh[v] + \frac{\sigma[v,t]}{\sigma[w,t']}\cdot \delta^{\sh}[w,t']$
				\State $\delta^{\fm}[v,t] \gets \delta^{\fm}[v,t]
			+ \frac{\sigma[v,t]}{\sigma[w,t']}\cdot \delta^{\fm}[w,t']$
				\State $\bcfm[v] \gets \bcfm[v] + \frac{\sigma[v,t]}{\sigma[w,t']}\cdot \delta^{\fm}[w,t']$
			\EndFor
		\EndWhile
	\EndFor
	\State \textbf{return} $\bcsh$, $\bcfm$
\end{algorithmic}
\end{algorithm}

\paragraph{Strict prefix foremost temporal betweenness.}
\cref{alg:prefix-foremost-prioqueue} modifies Brandes'
algorithm~\cite{brandes_faster_2001} to count strict prefix foremost temporal
paths. The overall structure of the algorithm remains the same. Instead of
iterating over all neighbors, however, we add add all temporal edges of a
vertex into a priority queue (prioritizing early time labels). This allows us
to traverse the graph in a time-respecting manner and find the prefix foremost temporal paths. %The dependency-accumulation in lines 23---28 is analogous to Brandes.

Notably, since the time labels of predecessors are unique in prefix foremost
temporal paths, we do not have to consider vertex appearances in this case.
Intuitively, this is the main reason why we can achieve both a faster running
time a lower space consumption.

\begin{proposition}
Given a temporal graph $\TG=(V,\E,T)$, \cref{alg:prefix-foremost-prioqueue}
computes the prefix-foremost temporal betweenness of all $v\in V$ in $\bigO(|V|
\cdot |\E| \cdot \log |\E|)$ time and $\bigO(|\E|+|V|)$ space.
\end{proposition}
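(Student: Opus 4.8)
The plan is to follow the two-phase template of \cref{alg:shortest} --- for each source a time-respecting forward traversal that computes foremost arrival times, path counts and predecessor sets, followed by a backward dependency accumulation --- while exploiting the simplification noted in \cref{sec:acyclic}: in a \emph{strict} prefix foremost path the predecessor of a vertex $v$ always reaches $v$ exactly at $v$'s foremost arrival time, so $\predecessorsprefix{s}{v}$ can be stored as a set of vertices and a single count $\sigmaprefix{sv}$ per vertex suffices (no vertex appearances), which is what yields the improved time and space bounds.

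First I would establish correctness of the forward phase. Fix a source $s$. I would prove the invariant that, when a time edge $e=(\{v,u\},t)$ is extracted from the priority queue --- with $v$ the endpoint whose finalisation inserted $e$, so that $v$ already carries a recorded arrival time with $\fm(s,v)<t$ --- every time edge of label $<t$ has already been handled and every count $\sigmaprefix{sv'}$ with $\fm(s,v')<t$ is final. Hence: if $u$ has no arrival time yet, then $t=\fm(s,u)$ (an earlier prefix foremost arrival at $u$ would have inserted an incident edge of smaller label); if $u$ already has arrival time exactly $t$, then $v$ is a prefix foremost predecessor of $u$; and if $u$'s recorded arrival time is strictly below $t$, then $e$ is discarded. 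In the first two cases the algorithm adds $v$ to $\predecessorsprefix{s}{u}$ and increases $\sigmaprefix{su}$ by $\sigmaprefix{sv}$. Together with the base case $\sigmaprefix{ss}=1$ and the acyclicity of the strict prefix foremost predecessor relation (\cref{sec:acyclic}), an induction on the foremost arrival time shows that all $\sigmaprefix{sv}$ and all sets $\predecessorsprefix{s}{v}$ are computed correctly; here I would cite Wu et al.~\cite{wu_efficient_2016} for correctness of the priority-queue traversal for foremost arrival times and for the existence of a prefix foremost path to every vertex reachable from $s$.

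Next I would handle the backward phase. Popping vertices in order of non-increasing foremost arrival time is a reverse topological order of the acyclic predecessor graph, so $\dprefix{s}{w}$ is final when $w$ is handled; the algorithm then, for every $v\in\predecessorsprefix{s}{w}$, adds $\frac{\sigmaprefix{sv}}{\sigmaprefix{sw}}\cdot\dprefix{s}{w}$ to both $\dprefix{s}{v}$ and the running score $\bcpfm[v]$. With the base term $\pdprefix{sv}{v}=1$ supplied at initialisation (used only to seed $\dprefix{s}{v}$), this realises exactly the recursion of \cref{lemma:prefix-dependency}, so $\bcpfm[v]$ accumulates $\dprefix{s}{v}-\pdprefix{sv}{v}$ per source; subtracting from $\bcpfm[s]$ the number of vertices reachable from $s$ and adding $1$ at the end of each source iteration --- exactly the correction of \cref{lem:betversions}, with the time index collapsed since prefix foremost arrival times are unique --- then gives, via \cref{lemma:depbet}, that the array $\bcpfm$ holds the prefix-foremost temporal betweenness of every vertex, which proves correctness.

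Finally, for the bounds: the outer loop runs $|V|$ times; within one source iteration each time edge enters and leaves the priority queue $\bigO(1)$ times at $\bigO(\log|\E|)$ cost, all other per-edge work is constant, and the backward pass touches each reachable vertex and each predecessor arc once, so one iteration runs in $\bigO(|\E|\log|\E|)$ time (re-initialising only the entries touched in the previous iteration, so the full $\bigO(|V|)$ reset is not incurred per source) and the total is $\bigO(|V|\cdot|\E|\cdot\log|\E|)$. For space, beyond the input ($\bigO(|V|+|\E|)$) we keep $\bigO(|V|)$ scalars for arrival times, counts, dependencies and the output, the priority queue holds $\bigO(|\E|)$ entries, and the predecessor lists have total length $\sum_v|\predecessorsprefix{s}{v}|\le 2|\E|$, giving $\bigO(|V|+|\E|)$ overall. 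I expect the main obstacle to be making the forward-phase invariant watertight --- in particular arguing that extracting a label-$t$ edge into a not-yet-reached vertex whose other endpoint was reached strictly before $t$ certifies $t$ as that vertex's foremost arrival time, and that no prefix foremost path (hence no contribution to a count or a predecessor set) is overlooked --- whereas the complexity analysis is routine.
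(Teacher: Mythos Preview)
Your proposal is correct and follows essentially the same approach as the paper --- correctness via \cref{lemma:prefix-dependency} together with the acyclicity of the strict prefix-foremost predecessor relation, and the running time via the $|V|$-fold priority-queue traversal --- only spelled out in considerably more detail than the paper's two-line argument. Your parenthetical about avoiding the full $\bigO(|V|)$ reset per source is unnecessary for the stated bound (under the paper's standing assumption $|V|\le|\E|$ the $\bigO(|V|^2)$ total initialisation cost is already dominated), but it does no harm.
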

\begin{proof}
The proof is essentially analogous to the correctness proof for
\cref{alg:shortest} but uses \cref{lemma:prefix-dependency}.

It is easy to check that the running time of
\cref{alg:prefix-foremost-prioqueue} is upper-bounded by $|V|$ (the outer loop
in \cref{line:outpre}) times the size of $Q$ (the loop from \cref{line:qloop2})
times the time necessary to dequeue (\cref{line:deq}). The
size of $Q$ is in $\bigO(|\E|)$ and the time to dequeue is in $\bigO(\log
|\E|)$.
Here, we assume that $|V|\le|\E|$. The claimed running time follows.

The space consumption of \cref{alg:prefix-foremost-prioqueue} is in
$\bigO(|\E|+|V|)$ the arrays for the temporal path counts and the temporal
dependencies used in the algorithm are all of size $|V|$.
\end{proof}

\begin{algorithm}[t!]
\caption{Strict prefix-foremost betweenness}\label{alg:prefix-foremost-prioqueue}
\footnotesize
\begin{algorithmic}[1]
\Input{A temporal graph $\TG = (V,\E,T)$.}
\Output{Betweenness $\bcpfm$ of all vertices $v \in V(\TG)$}
	\For{$s \in V$}\label{line:outpre}
		\For{$v \in V$}\Comment{Initialization}
			\State $P[v] \gets \emptyset$; $\tmin[v] \gets -1$
			\State $\sigma[v] \gets 0$; $\delta[v] \gets 1$
		\EndFor
		\State $\tmin[s] \gets 0$; $\sigma[s] \gets 1$
		\State $S \gets$ empty stack
		\State $Q \gets$ empty priority queue \Comment{Prioritized by time label}
		\State $Q \gets \operatorname{enqueueAll}(\{\trans{s}{t}{v} \mid \tedge{s}{t}{v} \in \E\})$
		\While{$Q$ not empty}\label{line:qloop2}
			\State $\trans{v}{t}{w} \gets \operatorname{dequeue}(Q)$\label{line:deq}
			\If {$\tmin[w] = -1$}\Comment{First and foremost arrival in $w$}
				\State $\tmin[w] \gets t$
				\State $S \gets \operatorname{push}(w)$
				\State $Q.\operatorname{enqueueAll}(\{\trans{w}{t'}{x} 
					\mid \tedge{w}{t'}{x} \in \E, t < t'\})$
			\EndIf
			\If {$\tmin[w] = t$} \Comment{Prefix foremost path to $w$ via
				$v$}
				\State $\sigma[w] \gets \sigma[w] + \sigma[v]$
				\State $P[w] \gets P[w] \cup \{v\}$
			\EndIf
		\EndWhile
		\State $\bcpfm[s] \gets \bcpfm[s] - \lvert\{i \mid \dist[i] \geq 0\}\rvert + 1$
		\While{$w \gets \operatorname{pop}(S)$} \Comment{Vertices in
		order of non-increasing distance from $s$}
			\For{$v \in P[w]$}
				\State $\delta[v] \gets \delta[v] + \frac{\sigma[v]}{\sigma[w]}\cdot \delta[w]$
				\State $\bcpfm[v] \gets \bcpfm[v] + \frac{\sigma[v]}{\sigma[w]}\cdot \delta[w]$
			\EndFor
		\EndWhile
	\EndFor
	\State \textbf{return} $\bcpfm$
\end{algorithmic}
\end{algorithm}

\subsection{Algorithms Based on the Static Expansion}\label{sec:staticexp}
In this section, we discuss the presumably more straightforward way of computing
temporal betweenness, namely through the use of so-called \emph{static
expansions} or time-expanded temporal
graphs,
which are
a key
tool in
temporal
graph
algorithmics~\cite{Zsc+20,Akr+19,kempe_connectivity_2002,MMS19,wu_efficient_2016}.
Static expansions are directed graphs that have a vertex for every vertex
\emph{appearance} of the corresponding temporal graph and they preserve the
connectivity between vertex appearances of the temporal graph.

The main idea is to use Brandes' algorithm~\cite{brandes_faster_2001} for
(directed) static graphs on the static expansion. Of course we need to make sure
that each shortest path in the static expansion corresponds to exactly one
optimal temporal path in the temporal graph for the optimality criterion that
we are interested in.
To do this, it is necessary to exclude some paths from the betweenness
computation. To this end, we first give a generalized version of static
betweenness.
Then we present the details on how to construct static expansions for the
optimality concepts used in our work.
Notably, we do not present a static expansion for the strict prefix-foremost
case, since for this case our algorithm is significantly faster than for the
other optimality concepts.

	Let~$S\subseteq V$ be a set of ``source'' vertices and~$Z\colon S\rightarrow
	2^V$ be a function that, for any~$s\in S$, returns the set of ``terminal''
	vertices for that source.
	Now we define the~$S$-$Z$-Betweenness of a vertex~$v\in V$ to
	be
	\[
	\betweennesssz(v) = \sum_{s\in S\setminus\{v\}}\sum_{z\in
	Z(s)\setminus\{v\}}\delta_{sz}(v).
	\]
	The problem of computing the~$S$-$Z$-Betweenness for each vertex in an
	arbitrary (weighted) directed graph can be solved with a straightforwardly
	modified version Brandes' algorithm~\cite{brandes_faster_2001}. We omit the details here.
	
	\paragraph{Static expansion for shortest
	temporal betweenness.}\label{sec:shortest} We now describe how to construct a
	static expansion of a temporal graph in order to compute (strict) shortest temporal betweenness.
	 Let~$\TG=(V,\E,T)$ be
	a temporal graph.
	We then define the static expansion graph for the (strict) shortest betweenness
	as the directed unweighted graph~$\staticg = \statictuple$, with the following sets of vertices and edges:
	
	For each vertex~$v\in V$ of the original graph, we define~$T + 2$ vertices in
	the static expansion---one for each appearance and two special vertices: a
	``source'' and a ``sink.'' Formally,
	\begin{equation*}
		\staticv := \bigcup_{v\in V}\left\{v_t \mid t\in[T+1]\cup\{0\}\right\}.
	\end{equation*}
	
	For the edges we have three cases. The vertices~$v_{T+1}$ are used as sentinel
	``terminal'' vertices that will signify the end of a path in our betweenness
	computation. Hence, they do not have any outgoing edges. Now, for any temporal
	edge~$(\{v, w\}, t)\in\E$ and every~$v_{t'}$ with~$t'\leq t$, we either add the
	directed edge~$(v_{t'}, w_t)$, or the edge~$(v_{t'}, w_{t+1})$---the former in
	the non-strict case, the latter in the strict case. Moreover, we also add an
	edge to the corresponding terminal vertex, that is, we add the edge~$(v_{t'},
	w_{T+1})$. Naturally, we also go through the same process with~$w$ as we did
	with~$v$.
	Formally, we have in the non-strict case:
	\begin{equation*}
		\statice := \bigcup_{(\{v, w\}, t)\in\E}\bigcup_{0\leq t' \leq t}\{(v_{t'},
		w_t), (v_{t'}, w_{T+1}), (w_{t'}, v_t), (w_{t'}, v_{T+1})\},
	\end{equation*}
	and in the strict case:
	\begin{equation*}
		\statice := \bigcup_{(\{v, w\}, t)\in\E}\bigcup_{0\leq t' \leq t}\{(v_{t'},
		w_{t+1}), (v_{t'}, w_{T+1}), (w_{t'}, v_{t+1}), (w_{t'}, v_{T+1})\}.
	\end{equation*}
	
	Now, to use the static expansion to compute the betweenness values in the
	temporal graph, we first compute the~$S$-$Z$-betweenness on the static
	expansion with
	\begin{align*}
		S & := \{v_0\mid v\in V\}, \\
		Z(s) & := \{v_{T+1}\mid v\in V\setminus\{s\}\}.
	\end{align*}
% 	After doing that we can relate the betweenness values of the vertices of the static expansion to those of the temporal graph:
We can now compute the temporal betweenness values using the following (where
$\star$ is the corresponding optimality concept):
	\begin{equation*}
		\bc(v) = \sum_{t\in\natinterval{T}}\betweennesssz(v_t).
	\end{equation*}
	
   \paragraph{Static expansion for shortest foremost temporal betweenness.}
	This variant of static expansion will be similar as the static expansion from
	the previous paragraph. Indeed, for a temporal graph~$\TG=(V,\E,T)$ we
	define the static expansion for (strict) shortest-foremost betweenness to be
	the directed weighted graph~$\staticg = \statictuplew$, with~$\staticv$
	and~$\statice$ defined exactly as in the static expansion for (strict) shortest
	betweenness (see previous paragraph). Additionally, we use the edge weight
	function~$\weightfn$ to ``simulate'' the foremost aspect of the paths. We let all ``internal'' edges be of equal weight, but let the ``terminal'' edges have a weight related to their timestamps. We set the weight in such a way as to make it so that any path to an ``appearance''~$v_t$ is better than any path to an ``appearance''~$v_{t'}$ for~$t < t'$. Formally, we have
	\begin{equation*}
		\omega(v_t, w_{t'}) = \begin{cases} 1, & \text{if } t' \leq T, \\ (n +
		1)\cdot(t' + 1), & \text{otherwise.}\end{cases}
	\end{equation*}
	Note that this weight function works in both the strict and the non-strict
	case. 
		
	The further procedure, that is, the computation of~$S$-$Z$-betweenness and then
	the temporal betweenness values can be done in exactly the same way as in
	for shortest temporal betweenness (see the previous paragraph).

\section{Experimental Evaluation}
\label{ch:exp}
In this section, we analyze the running time of our implementations of
\cref{alg:shortest} (computing strict and non-strict shortest and shortest
foremost temporal betweenness) and \cref{alg:prefix-foremost-prioqueue}
(computing strict prefix-foremost temporal betweenness) as well as
implementations of algorithms based on static expansions (computing strict and non-strict shortest and shortest
foremost temporal betweenness, same as \cref{alg:shortest}) on several
real-world temporal graphs and we investigate the differences between the five different betweenness concepts our algorithms can compute.

\subsection{Setup and Statistics}
We implemented\footnote{The code of our implementation is available under GNU
general public license version 3 at
\url{http://fpt.akt.tu-berlin.de/software/temporal\_betweenness/}.}
\cref{alg:shortest} and \cref{alg:prefix-foremost-prioqueue} and algorithms
using the static expansions (see \cref{sec:staticexp}) in C++ and we compiled
our source code with GCC 7.5.0.
We carried out experiments on an Intel Xeon W-2125 computer with four cores clocked at 4.0\,GHz and with 256\,GB RAM, running Linux 4.15. We did not utilize the
parallel-processing capabilities.

We used the following freely available data sets for temporal graphs:
Physical-proximity networks\footnote{Available at
\url{http://www.sociopatterns.org/datasets/}.} between high school students
(``highschool-2011'', ``highschool-2012'', ``highschool-2013''~\cite{gemmetto2014mitigation,stehle2011high,fournet2014contact}), children and teachers in a primary school (``primaryschool''~\cite{stehle2011high}),
patients and health-care workers (``hospital-ward''~\cite{vanhems2013estimating}), 
attendees of the Infectious SocioPatterns event (``infectious''~\cite{isella2011s}), and
conference attendees of ACM Hypertext 2009 (``hypertext''~\cite{isella2011s}), and
%Email communication networks of the 2016 Democratic National Committee email leak (``dnc''~\cite{KONECT17}) and
an email communication networks of the Karlsruhe Institute of Technology (KIT) (``karlsruhe''~\cite{EMT2011}), 
and one social communication network (``facebook-like''~\cite{opsahl2009clustering}).
%
%\begin{itemize}
%\item physical-proximity networks\footnote{Available at http://www.sociopatterns.org/datasets/ .} between 
%\begin{itemize}
%\item high school students (``highschool-2011'', ``highschool-2012'', ``highschool-2013''~\cite{gemmetto2014mitigation,stehle2011high,fournet2014contact}), 
%\item children and teachers in a primary school (``primaryschool''~\cite{stehle2011high}),
%\item patients and health-care workers (``hospital-ward''~\cite{vanhems2013estimating}), 
%\item attendees of the Infectious SocioPatterns event (``infectious''~\cite{isella2011s}),
%\item conference attendees of ACM Hypertext 2009 (``hypertext''~\cite{isella2011s}), 
%\end{itemize}
%\item an email communication network of the 2016 Democratic National Committee email leak (``dnc''~\cite{KONECT17}),
%\item an email communication network (``karlsruhe''~\cite{EMT2011}), 
%\item a social communication network (``facebook-like''~\cite{opsahl2009clustering}).
%\end{itemize}
We summarize some important statistics about the different data sets in \cref{tab:stats}.
\begin{table*}[t!]
\tiny
  \centering
  \caption{Statistics for the data sets used in our experiments. The lifetime
  $T$ of a graph is the difference between the largest and smallest time stamp
  on an edge in the graph. The resolution~$r$ indicates how often edges were
  measured. The last five columns state the running times in seconds of our
  implementation, where the last two correspond to the algorithms based on
  static expansions. From left to right: non-strict shortest and shortest
  foremost betweenness, strict shortest and shortest
  foremost betweenness, strict prefix-foremost betweenness, non-strict shortest and shortest
  foremost betweenness computed with static-expansion-based algorithms, strict
  shortest and shortest foremost betweenness computed with
  static-expansion-based algorithms. A~-1 indicates that
  the instance was not solved within five hours.} \makebox[\textwidth]{ \pgfplotstabletypeset[ col sep=comma, columns={Data,Vertices, Edges,Resolution (s), Lifetime (s),R1,R2,R3,AN,AS}, %,Classical
   % Degeneracy,Degeneracy0,Degeneracy3,Degeneracy5,Degeneracy7},
   columns/Data/.style={column type=l,string type, column name={Data Set}},
   columns/Vertices/.style={column type=r, column name={\# Vtc's $n$}},
   columns/Edges/.style={column type=r,int detect, column name={\# Edges $M$}},
   columns/Lifetime (s)/.style={column type=r,int detect, column name={Lifetime $T$}},
   columns/Resolution (s)/.style={column type=r,int detect, column name={Res. $r$}},
   columns/R1/.style={column type=r,int detect, column name={N-Str.\ Sh (Fm)}},
   columns/R2/.style={column type=r,int detect, column name={Str.\ Sh (Fm)}},
   columns/R3/.style={column type=r,int detect, column name={Str.\ P Fm}},
   columns/AN/.style={column type=r,int detect, column name={Ex N-Str.\ Sh
   (Fm)}}, 
   columns/AS/.style={column type=r,int detect, column name={Ex Str.\ Sh
   (Fm)}},
   %columns/Classical Degeneracy/.style={column name={Static}},
   %columns/Degeneracy0/.style={column name={$\Delta=0$}},
   %columns/Degeneracy3/.style={column name={$\sim 5^3$}},
   %columns/Degeneracy5/.style={column name={$\sim 5^5$}},
   %columns/Degeneracy7/.style={column name={$\sim 5^7$}},
   every head row/.style={before row=\toprule}, %  & & & & & \multicolumn{5}{c}{Degeneracy} \\ \cmidrule(r){6-10}, after row=\midrule},
   every last row/.style={after row=\bottomrule},
   ]{GraphData.csv}}
  \label{tab:stats}
\end{table*}%

\subsection{Experimental Results}
We now discuss the results of our experiments. We analyzed the influence of the
temporal betweenness type on the running time, the distribution of the
temporal betweenness values, and the ranking of the ten vertices with the
hightest temporal betweenness values.
\paragraph{Running time.}
As indicated by our theoretical running time bounds (see
\cref{table:complexity}), \cref{alg:shortest} (computing shortest and foremost
shortest betweenness) is several orders of magnitudes slower than
\cref{alg:prefix-foremost-prioqueue} (computing prefix foremost betweenness).
\cref{alg:shortest} solved all instances except for (which was not solved within
the timeout of five hours) within 45 minutes (keep in mind that
\cref{alg:shortest} computes shortest and foremost shortest temporal betweenness simultaneously).
\cref{alg:prefix-foremost-prioqueue} solved all instances except ``karlsruhe''
within 30 seconds. We show the specific running times in \cref{tab:stats}.
It is noticable that on small instances, the algorithms based on static
expansions are roughly a factor of two faster. However, on large instances, our
approach seems to be better, see for example ``facebook-like'', where we are
faster by a factor of roughly six, and ``infectious'', where the algorithm based
on static expansions did not finish within five hours.
We believe that the static-expansion-based algorithm has higher memory
requirements and more inefficient cache usage and hence becomes significantly
slower for larger instances.

\paragraph{Impact of temporal betweenness type on the temporal betweenness
distribution.} In \cref{fig:bethist1} we exemplarily show histograms of the
temporal betweenness values of the ``highschool-2013'' data set, the
``facebook-like'' data set, and the ``primaryschool'' data set. We can observe
that the temporal betweenness centrality has a power-law-like distribution and
the temporal betweenness type does not have a strong influence on the
distribution in the ``highschool-2013'' data set and the ``facebook-like'' data
set (which is the case in most data sets).  In the ``primaryschool'' data set,
however, the distributions for non-strict shortest and strict shortest temporal
betweenness are very similar but the other ones are quite different from each other.

    \pgfplotsset{
        compat=1.3,
    }

%    \pgfplotstableread{
%        0 35569 8842    134984
%        1 30428 4689    34077
%        2 32920 6207    73787
%        3 16462 7562    23496
%        4 12315 8572    16565
%        5 76572 19572   26030
%    }\dataset
    \begin{figure}[t!] \centering
\pgfplotstableread{
0 200 180 200 193 181
1 71 79 71 73 70
2 28 30 28 32 44
3 11 20 11 12 14
4 6 6 6 7 5
5 4 7 4 5 7
6 2 0 2 1 2
7 2 0 2 1 1
8 2 3 2 1 1
9 1 2 1 2 2
}\dataset

%    \pgfplotstableread{
%        0 35569 8842    134984
%        1 30428 4689    34077
%        2 32920 6207    73787
%        3 16462 7562    23496
%        4 12315 8572    16565
%        5 76572 19572   26030
%    }\dataset
\begin{tikzpicture}[scale=.8]
    \begin{axis}[
        width=18cm,
        height=5cm,
        ymin=0,
        ymax=200,
        %ymode=log,
        ybar,
        bar width=4pt,
        %ylabel={Y-Label},
        %xtick=data,
%        xticklabels={
%            1,
%            2,
%            3,
%            4,
%            5,
%            6,
%            7,
%            8,
%            9,
%            10
%        },
        %xticklabel style={yshift=-8ex},
        %major x tick style={
            % (this is a better way than assigning `opacity=0')
        %    /pgfplots/major tick length=0pt,
        %},
        %minor x tick num=1,
        %minor tick length=2ex,
        % ---------------------------------------------------------------------
        % (adapted solution from <https://tex.stackexchange.com/a/141006/95441>)
        % we want to provide absolute `at' values ...
%        scatter/position=absolute,
%        node near coords style={
%            % ... to provide axis coordinates at `ymin' for the nodes
%            at={(axis cs:\pgfkeysvalueof{/data point/x},\pgfkeysvalueof{/pgfplots/ymin})},
%            % then also the `anchor' has to be adapted ...
%            anchor=east,
%            % ... because we rotate the labels which would overlap otherwise
%            rotate=90,
%        },
        % ---------------------------------------------------------------------
        % (created a cycle list to shorten the below `\addplot' entries)
        cycle list={
            {draw=black,fill=red!80},
            {draw=black,fill=blue!80},
            {draw=black,fill=red!60},
            {draw=black,fill=blue!60},
            {draw=black,fill=blue!40},
        },
        % (moved common option here)
        table/x index=0,
    ]
        \addplot+ [] table [y index=1] \dataset;
        \addplot+ [] table [y index=2] \dataset;
        \addplot+ [] table [y index=3] \dataset;
        \addplot+ [] table [y index=4] \dataset;
        \addplot+ [] table [y index=5] \dataset;
    \end{axis}
\end{tikzpicture}    
    
    \pgfplotstableread{
0 1822 1799 1823 1799 1777
1 39 56 38 56 76
2 15 20 15 20 23
3 9 9 8 9 9
4 4 6 5 6 7
5 1 2 1 2 2
6 3 2 3 2 2
7 0 1 0 1 1
8 2 0 2 0 0
9 4 4 4 4 2
}\dataset
    
\begin{tikzpicture}[scale=.8]
    \begin{axis}[
        width=18cm,
        height=5cm,
        ymin=0,
        ymax=1900,
        ymode=log,
        ybar,
        bar width=4pt,
        %ylabel={Y-Label},
        %xtick=data,
%        xticklabels={
%            1,
%            2,
%            3,
%            4,
%            5,
%            6,
%            7,
%            8,
%            9,
%            10
%        },
        %xticklabel style={yshift=-8ex},
        %major x tick style={
            % (this is a better way than assigning `opacity=0')
        %    /pgfplots/major tick length=0pt,
        %},
        %minor x tick num=1,
        %minor tick length=2ex,
        % ---------------------------------------------------------------------
        % (adapted solution from <https://tex.stackexchange.com/a/141006/95441>)
        % we want to provide absolute `at' values ...
%        scatter/position=absolute,
%        node near coords style={
%            % ... to provide axis coordinates at `ymin' for the nodes
%            at={(axis cs:\pgfkeysvalueof{/data point/x},\pgfkeysvalueof{/pgfplots/ymin})},
%            % then also the `anchor' has to be adapted ...
%            anchor=east,
%            % ... because we rotate the labels which would overlap otherwise
%            rotate=90,
%        },
        % ---------------------------------------------------------------------
        % (created a cycle list to shorten the below `\addplot' entries)
        cycle list={
            {draw=black,fill=red!80},
            {draw=black,fill=blue!80},
            {draw=black,fill=red!60},
            {draw=black,fill=blue!60},
            {draw=black,fill=blue!40},
        },
        % (moved common option here)
        table/x index=0,
    ]
        \addplot+ [] table [y index=1] \dataset;
        \addplot+ [] table [y index=2] \dataset;
        \addplot+ [] table [y index=3] \dataset;
        \addplot+ [] table [y index=4] \dataset;
        \addplot+ [] table [y index=5] \dataset;
    \end{axis}
\end{tikzpicture}

\pgfplotstableread{
0 78 135 79 153 107
1 58 68 57 56 70
2 40 20 41 15 26
3 18 7 18 5 13
4 18 2 16 5 10
5 13 3 14 0 5
6 11 0 11 2 5
7 3 4 3 1 2
8 2 1 2 2 2
9 1 2 1 3 2
}\dataset

%    \pgfplotstableread{
%        0 35569 8842    134984
%        1 30428 4689    34077
%        2 32920 6207    73787
%        3 16462 7562    23496
%        4 12315 8572    16565
%        5 76572 19572   26030
%    }\dataset
\begin{tikzpicture}[scale=.8]
    \begin{axis}[
        width=18cm,
        height=5cm,
        ymin=0,
        ymax=160,
        %ymode=log,
        ybar,
        bar width=4pt,
        %ylabel={Y-Label},
        %xtick=data,
%        xticklabels={
%            1,
%            2,
%            3,
%            4,
%            5,
%            6,
%            7,
%            8,
%            9,
%            10
%        },
        %xticklabel style={yshift=-8ex},
        %major x tick style={
            % (this is a better way than assigning `opacity=0')
        %    /pgfplots/major tick length=0pt,
        %},
        %minor x tick num=1,
        %minor tick length=2ex,
        % ---------------------------------------------------------------------
        % (adapted solution from <https://tex.stackexchange.com/a/141006/95441>)
        % we want to provide absolute `at' values ...
%        scatter/position=absolute,
%        node near coords style={
%            % ... to provide axis coordinates at `ymin' for the nodes
%            at={(axis cs:\pgfkeysvalueof{/data point/x},\pgfkeysvalueof{/pgfplots/ymin})},
%            % then also the `anchor' has to be adapted ...
%            anchor=east,
%            % ... because we rotate the labels which would overlap otherwise
%            rotate=90,
%        },
        % ---------------------------------------------------------------------
        % (created a cycle list to shorten the below `\addplot' entries)
        cycle list={
            {draw=black,fill=red!80},
            {draw=black,fill=blue!80},
            {draw=black,fill=red!60},
            {draw=black,fill=blue!60},
            {draw=black,fill=blue!40},
        },
        % (moved common option here)
        table/x index=0,
    ]
        \addplot+ [] table [y index=1] \dataset;
        \addplot+ [] table [y index=2] \dataset;
        \addplot+ [] table [y index=3] \dataset;
        \addplot+ [] table [y index=4] \dataset;
        \addplot+ [] table [y index=5] \dataset;
    \end{axis}
\end{tikzpicture}

\caption{Top: Temporal betweenness histogram of ``highschool-2013'' data set.
Vertices are collected in 10 evenly distributed buckets between 0 and the
highest temporal betweenness value. The $y$-axis corresponds to the number of
vertices. Middle:
Temporal betweenness histogram of ``facebook-like'' data set. The $y$-axis is on a log-scale.
Bottom: Temporal betweenness histogram of ``primaryschool'' data set. 
% Here, the $y$-axis is \emph{not} on a log-scale.
The temporal betweenness types from left to right are: non-strict shortest,
non-strict shortest foremost, strict shortest, strict shortest foremost, strict
prefix foremost. Variants of foremost betweenness are
colored in shades of red; shortest betweenness is colored
in shades of blue.}\label{fig:bethist1}
\end{figure}
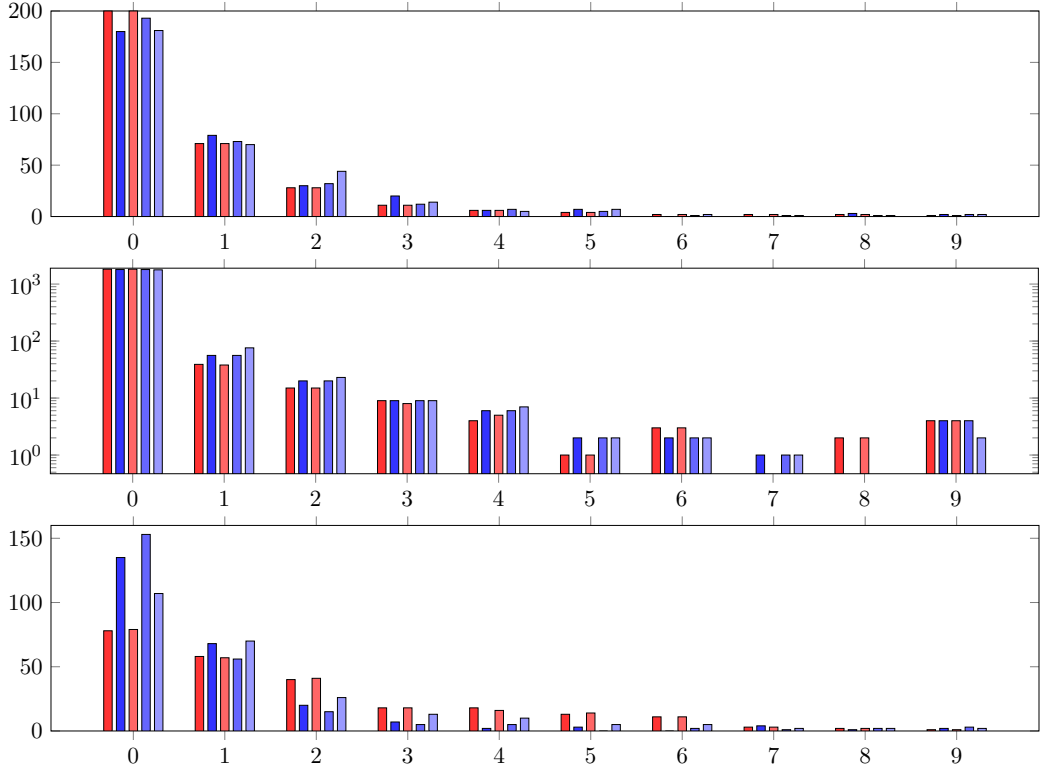

\paragraph{Impact of temporal betweenness type on the vertex ranking.}
\begin{table*}[t!]
\tiny
  \centering
  \caption{Kendall's tau correlation measure for each pair of vertex rankings induced by the different temporal betweenness versions. Values close to 1 indicate strong agreement and values close to -1 indicate strong disagreement between the rankings.}
  \makebox[\textwidth]{ 
  \pgfplotstabletypeset[
   col sep=comma,
   columns={Data,top12,top13,top14,top15,top23,top24,top25,top34,top35,top45}, %,Classical Degeneracy,Degeneracy0,Degeneracy3,Degeneracy5,Degeneracy7},
   columns/Data/.style={column type=l,string type, column name={Data Set}},
   columns/top12/.style={column type=r, column name={\begin{minipage}{12ex}non-str.\ sh vs.\\ non-str.\ sh fm\end{minipage}}},
   columns/top13/.style={column type=r,int detect, column name={\begin{minipage}{10ex}non-str.\ sh vs.\ str.\ sh\end{minipage}}},
   columns/top14/.style={column type=r,int detect, column name={\begin{minipage}{12ex}non-str.\ sh vs.\ str.\ sh fm\end{minipage}}},
   columns/top15/.style={column type=r,int detect, column name={\begin{minipage}{10ex}non-str.\ sh vs.\ str.\ p fm\end{minipage}}},
   columns/top23/.style={column type=r,int detect, column name={\begin{minipage}{12ex}non-str.\ sh fm vs.\ str.\ sh\end{minipage}}},
   columns/top24/.style={column type=r,int detect, column name={\begin{minipage}{12ex}non-str.\ sh fm vs.\ str.\ sh fm\end{minipage}}},
   columns/top25/.style={column type=r,int detect, column name={\begin{minipage}{12ex}non-str.\ sh fm vs.\ str.\ p fm\end{minipage}}},
   columns/top34/.style={column type=r,int detect, column name={\begin{minipage}{8ex}str.\ sh vs.\ str.\ sh fm\end{minipage}}},
   columns/top35/.style={column type=r,int detect, column name={\begin{minipage}{8ex}str.\ sh vs.\ str.\ p fm\end{minipage}}},
   columns/top45/.style={column type=r,int detect, column name={\begin{minipage}{10ex}str.\ sh fm vs.\ str.\ p fm\end{minipage}}},
   %columns/Classical Degeneracy/.style={column name={Static}},
   %columns/Degeneracy0/.style={column name={$\Delta=0$}},
   %columns/Degeneracy3/.style={column name={$\sim 5^3$}},
   %columns/Degeneracy5/.style={column name={$\sim 5^5$}},
   %columns/Degeneracy7/.style={column name={$\sim 5^7$}},
   every head row/.style={before row=\toprule}, %  & & & & & \multicolumn{5}{c}{Degeneracy} \\ \cmidrule(r){6-10}, after row=\midrule},
   every last row/.style={after row=\bottomrule},
   ]{results2.csv}}
  \label{tab:KT}
\end{table*}%
\begin{table*}[t!]
\scriptsize
  \centering
  \caption{Size of the intersection of each pair of sets containing the ten vertices with highest temporal betweenness values.}
  \makebox[\textwidth]{ 
  \pgfplotstabletypeset[
   col sep=comma,
   columns={Data,top12,top13,top14,top15,top23,top24,top25,top34,top35,top45}, %,Classical Degeneracy,Degeneracy0,Degeneracy3,Degeneracy5,Degeneracy7},
   columns/Data/.style={column type=l,string type, column name={Data Set}},
columns/top12/.style={column type=r, column name={\begin{minipage}{12ex}non-str.\ sh vs.\\ non-str.\ sh fm\end{minipage}}},
   columns/top13/.style={column type=r,int detect, column name={\begin{minipage}{10ex}non-str.\ sh vs.\ str.\ sh\end{minipage}}},
   columns/top14/.style={column type=r,int detect, column name={\begin{minipage}{12ex}non-str.\ sh vs.\ str.\ sh fm\end{minipage}}},
   columns/top15/.style={column type=r,int detect, column name={\begin{minipage}{10ex}non-str.\ sh vs.\ str.\ p fm\end{minipage}}},
   columns/top23/.style={column type=r,int detect, column name={\begin{minipage}{12ex}non-str.\ sh fm vs.\ str.\ sh\end{minipage}}},
   columns/top24/.style={column type=r,int detect, column name={\begin{minipage}{12ex}non-str.\ sh fm vs.\ str.\ sh fm\end{minipage}}},
   columns/top25/.style={column type=r,int detect, column name={\begin{minipage}{12ex}non-str.\ sh fm vs.\ str.\ p fm\end{minipage}}},
   columns/top34/.style={column type=r,int detect, column name={\begin{minipage}{8ex}str.\ sh vs.\ str.\ sh fm\end{minipage}}},
   columns/top35/.style={column type=r,int detect, column name={\begin{minipage}{8ex}str.\ sh vs.\ str.\ p fm\end{minipage}}},
   columns/top45/.style={column type=r,int detect, column name={\begin{minipage}{10ex}str.\ sh fm vs.\ str.\ p fm\end{minipage}}},
   %columns/Classical Degeneracy/.style={column name={Static}},
   %columns/Degeneracy0/.style={column name={$\Delta=0$}},
   %columns/Degeneracy3/.style={column name={$\sim 5^3$}},
   %columns/Degeneracy5/.style={column name={$\sim 5^5$}},
   %columns/Degeneracy7/.style={column name={$\sim 5^7$}},
   every head row/.style={before row=\toprule}, %  & & & & & \multicolumn{5}{c}{Degeneracy} \\ \cmidrule(r){6-10}, after row=\midrule},
   every last row/.style={after row=\bottomrule},
   ]{results.csv}}
  \label{tab:top10}
\end{table*}%
In \cref{tab:KT} we present Kendall's tau correlation
measure\footnote{The \emph{Kendall's tau correlation
measure} is a measure to compare rankings. It is
defined as follows: $(\text{\#\,concordant pairs} - \text{\#\,discordant
pairs}) \ / \ \text{\#\,pairs}$.}~\cite{knight1966computer} for each pair of
vertex rankings induced by the different temporal betweenness versions. For the
temporal betweennes variants ``shortest'' and ``shortest foremost'' we can
observe that their respective non-strict and strict variants produce very
similar rankings. We can further see that the betweenness variants ``non-strict
shortest foremost'', ``strict shortest foremost'', and ``strict prefix
foremost'' are pairwise similar.
Recall that strict prefix foremost temporal betweenness was computed
significantly faster by our algorithms, hence we can conclude that if foremost
temporal paths are of interest, then the prefix foremost temporal betweenness
is much easier to compute and yields very similar results.

These findings are supported by the pairwise comparisons of the sets of the ten
vertices with the largest temporal betweenness values, which are presented in
\cref{tab:top10}. Notably, the ``primaryschool'' has quite different top ten
vertex sets for all pairs of betweenness variants that do not fall into the
above mentioned pairings.
We can also observe that ``non-strict shortest'' and ``strict prefix foremost''
produce the most dissimilar vertex rankings and top ten sets.

\section{Conclusion}
	\label{ch:conclusion}
We investigated several variants of temporal betweenness centrality based on the various optimization criteria for temporal paths. We have shown a surprising discrepancy in their computational complexity: while some variations are \SPC, others can be computed in polynomial time. More specifically, we found that 
counting foremost, and thus fastest paths, is \SPC, and in turn, the computation of the corresponding betweenness centrality scores is \SPC\ as well. 
In contrast to that, one can count shortest and shortest foremost temporal paths in polynomial time both for strict and non-strict paths. 
In the case of prefix-foremost paths, however, we found a polynomial-time algorithm for the strict version, whereas the non-strict version is again \SPC.
An intuitive explanation for this behavior might be that our algorithms strongly
rely on a recursive formulation for the so-called temporal dependencies, which in turn
requires that the predecessor relation of optimal temporal paths is acyclic and
that prefixes of optimal temporal paths are also optimal. For all
optimal temporal path types for which we show computational hardness of the
corresponding counting problem, one of the two mentioned requirements is not given.

As to challenges for future research, one direction is to attack the temporal
betweenness centrality variants which we have shown to be computationally hard
by means of approximation algorithms or the development of fixed-parameter
algorithms (e.g.\ for some structural graph parameters of the underlying graph).
Roughly in the same direction would be to undertake a closer study of the
special structures of real-world networks that might be algorithmically
exploitable.
A line of research directly motivated by our experiments would be to 
explore how to further decrease the memory consumption of our algorithms---the 
practical limitations seem to mainly come from the high space consumption,
at some point leading our algorithms to do a lot of paging (for the
static-expansion based algorithm, this point seems to be reached much earlier).
Specifically, our algorithm for shortest and shortest foremost temporal betweenness stores
dependency and path count values for every vertex-time step combination. This is
major difference to the static algorithm of Brandes~\cite{brandes_faster_2001}
which is less vulnerable in this respect.

On the experimental side, we compared our the running times of our
approach to static expansion based temporal betweenness algorithms. 
We found that for small instances, the static expansion based temporal
betweenness algorithms are faster but on large instances or algorithm performs
better. We conjecture that this is the case because our algorithm has a better
memory usage, hence the specific threshold of the input size where our
algorithm is faster probably depends heavily on the machine that is used for
the computation.

We can further observe that the temporal betweenness type does
not have a strong impact on the distribution of the betweenness values. When
comparing the vertex rankings produced by the betweenness values we can observe
that there is little difference between strict and the respective non-strict
variants. Furthermore, ``shortest foremost'' and ``prefix foremost'' yield
similar results in terms of the betweenness values of the vertices, while the
strict prefix foremost temporal betweenness values can be computed significantly faster.

It would
be very enlightening to compare the betweenness values of the two tractable
variants of temporal betweenness based on foremost paths (shortest foremost and
prefix foremost) to the betweenness values for temporal betweenness based on
foremost temporal paths (which is intractable). However, that requires a way to
compute or approximate the temporal betweenness values based on
foremost temporal paths. Hence, developing efficient (parameterized 
exponential-time)
algorithms or approximation algorithms for the temporal betweenness based
on foremost temporal paths is also an interesting future work direction.

\bibliographystyle{abbrvnat}
\bibliography{literature}

\end{document}